\documentclass[a4paper,onecolumn, unpublished]{quantumarticle}
\pdfoutput=1
\usepackage{note}
\usepackage[utf8]{inputenc}
\usepackage[english]{babel}
\usepackage[T1]{fontenc}
\usepackage{pifont}

\usepackage{minitoc}
\usepackage{mathtools}
\usepackage{microtype}

\usepackage[backend=bibtex, style=numeric,citestyle=numeric-comp, sorting=none,sortcites=true]{biblatex}
\usepackage{stmaryrd}
\usepackage{CJKutf8}

\begin{document}
\title{Adaptive Loss-tolerant Syndrome Measurements}
\author{Yuanjia Wang}
\email{raycosine@gmail.com}
\orcid{0000-0002-8147-2370}
\affiliation{Department of Electrical \& Computer Engineering, University of Southern California, Los Angeles, California}

\author{Todd A. Brun}
\affiliation{Department of Electrical \& Computer Engineering, University of Southern California, Los Angeles, California}

\date{March 18, 2026}

\maketitle
\begin{abstract}
In the presence of qubit losses, the building blocks of fault-tolerant error correction (FTEC) must be revisited. Existing loss-tolerant approaches are mainly architecture-specific, and little attention has been given to optimizing the syndrome measurement sequences under loss. Schemes designed for the standard Pauli error model are not directly applicable because the syndrome patterns differ when both Pauli errors and erasures can occur.

Based on recent advances in loss detection units and loss-tolerant syndrome extraction gadgets, 
we extend the study of adaptive Shor-style measurement sequences to the mixed error model. 
We begin by discussing how to adaptively convert correctable erasures into located errors. 
The minimal overhead is quantified by the number of stabilizer measurements, which can be reduced to a subgroup dimension problem for erasures arising in any FTEC circuit for qubits and prime-dimensional qudits. As a byproduct, we provide the construction of the canonical generating set with respect to a given bipartite partition for a stabilizer group on qudits of composite dimension. We then generalize both the weak and strong FTEC conditions. Finally, we present adaptive syndrome-measurement protocols for the mixed error model, generalizing the adaptive protocols for the standard Pauli error model. 
\end{abstract}

\section{Introduction}\label{sec:Introduction}

Quantum error-correcting codes (QECCs) protect quantum information from errors. In certain scenarios, such as communication, quantum states are transmitted through noisy channels, while the encoding and decoding operations are assumed to be perfect. However, quantum error correction must be implemented fault-tolerantly in general quantum-computation scenarios, since noise can occur at any point during the computation, including encoding and decoding.

Fault tolerance (FT) is guaranteed by performing \emph{syndrome extraction} (SE), which heavily relies on measurements \cite{bartolucci2023fusion}. For stabilizer codes, it suffices to extract a usable string containing the full syndrome information of the stabilizer group from 
noisy measurements, and then use it to identify and correct physical errors with a classical decoder. However, measurements on many current architectures are much slower than gate operations \cite{acharya2022suppressingquantumerrorsscaling, monroe2021faulttolerantcontrol, foxen2020continuous, gottesman2022opportunities}, hence introducing substantial time overhead.

Prior work has explored various approaches to reducing this overhead. First, for a large family of codes, syndrome extraction can be performed less frequently when fault tolerance is required only at the level of the entire algorithm, rather than for every single gadget \cite{zhou2025low}. The two components of syndrome extraction can also be optimized, respectively. In the first stage, \emph{syndrome measurement}, the goal is to reduce the number of measurements as much as possible, which is the primary focus of this work. The second stage, \emph{classical decoding}, requires efficient and scalable classical algorithms and is not the central focus of this work.

Efforts in these three directions are mutually compatible, allowing us to narrow the focus on reducing the number of measurements. The eponymous weight reduction techniques \cite{hastings2016weight, hastings2021quantum,  hsieh2025simplified, sabo2024weight} transform any quantum LDPC code to a new code with constant-weight stabilizers while maintaining good rate and distance. Another viable approach is to reduce the number of stabilizers that need to be measured. 
It is then worth addressing this question:

\emph{What is the minimum number of stabilizer measurements required to extract a usable syndrome string?
}

We will also primarily focus on the Shor-style syndrome extraction, which sequentially measures syndromes using a fresh ancillary cat state in every stabilizer measurement. The traditional Shor-style SE is costly and requires $O(d^2)$ rounds of syndrome string extraction in the worst case. There has recently been a series of works focusing on 
reducing the length of syndrome extraction sequences: \cite{delfosse2020short,delfosse2021beyond} on single-shot and sub-single-shot schemes satisfying weak FT conditions, \cite{tansuwannont2023adaptive} on improving Shor-style measurements \cite{shor1996fault} for general codes based on consecutive syndrome information, \cite{anker2025compressing} on combining multiple stabilizers into one and re-encoding syndrome information using a classical code, \cite{berthusen2025adaptive} on removing unnecessary stabilizer measurements using code concatenation.

All schemes we have mentioned so far are designed for the standard Pauli error model. There is also a need to study more realistic noise models. In neutral atom and superconducting quantum architectures, a type of non-Pauli error, called \emph{leakage}, can dominate and take the encoded quantum state out of the computational subspace. Leakage is harmful because it produces syndrome patterns that differ from those under the standard Pauli error model. Fortunately, recent studies like \cite{wu2022erasure, kubica2023erasure, chow2024circuit} have proposed 
techniques that convert leakage errors into erasures and atom losses, which let us now restrict our attention to a mixed-error model that consists of both Pauli errors and erasures.\footnote{In the remainder of this work, we use ``mixed error model'' to refer to this, and ``mixed error'' to denote a combination of erasures and Pauli errors.}

Our first and basic observation is that we should distinguish among \emph{qubit losses, erasures, and located Pauli errors}, 
which were used interchangeably in early studies on loss tolerance \cite{haselgrove2006trade, rohde2007error, lu2008experimental}. 

The erasure correction process should be decomposed as follows, unless the architecture does not support qubit replacement: 
\begin{enumerate}
    \item Erasures are firstly converted to a located Pauli error via qubit replacements and further stabilizer measurements. \footnote{A successful stabilizer measurement on qubits, some of which are replaced by fresh ones, either extracts a syndrome bit that is at least valid at that point, or projects the state one step closer to the stabilizer subspace while simultaneously refreshing the syndrome bit corresponding to that stabilizer element.}
    \item When we obtain a usable syndrome string and apply the corresponding correction operator, both the located Pauli error and the unlocated Pauli error should be corrected, as long as they stay in the correctable region. \footnote{Whenever possible, erasures should be handled 
    during the computation to ensure fault tolerance and preserve the code distance. The \emph{$k$-shift erasure recovery} method recently proposed in \cite{kobayashi2025erasure} can be viewed as a nonadaptive strategy to periodically refresh the physical qubits and prevent erasure accumulation.}
\end{enumerate}

For the first step, where erasures are converted into located Pauli errors, a similar question then naturally arises:

\emph{What is the minimum number of stabilizer measurements required to convert an erasure into a located Pauli error?
}

This seems to stay unnoticed until \cite{matsumoto2025reducing} showed that for correcting erasures in quantum error correction, such a reduction exists. However, to distinguish and correct the unlocated Pauli error, a full set of generators must still be measured for quantum error correction. Here we rephrase 
this question in the fault-tolerant regime and discuss how the losses of different types of qubits affect the syndrome information and the state, and how to efficiently capture usable syndrome information. By abuse of notation, and to avoid confusion with the already occupied term \emph{erasure conversion}, we refer to the process of qubit replacement followed by projection as \emph{erasure error correction (EEC)}, although at this stage the erasures are not truly corrected, but merely converted into located Pauli errors. For qubits and prime-dimensional qudits, we link this minimal cost to the dimension of the stabilizer subgroup, while the measurements are found using the canonical form of stabilizer generators given a bipartite partition; for composite-dimensional qudits, we explicitly construct the canonical form of bipartite stabilizer states using results from \cite{sarkar2024qudit}.    

For the second step, we also show how to reconstruct a unique correction operator from the erasure pattern and a usable syndrome string, up to stabilizer operations.

Meanwhile, on the basis of recent advances in loss detection units and loss-tolerant syndrome extraction gadgets developed in \cite{chow2024circuit, perrin2025quantumerrorcorrectionresilient, baranes2025leveragingatomlosserrors}, we now revisit the entire syndrome extraction process and ask:

\emph{How to adaptively perform the syndrome measurements under the mixed error model?}

Adaptive FTEC schemes for Pauli errors cannot be directly applied to this model, because the syndrome patterns no longer stay the same. 
Existing loss-tolerant schemes rely on specific codes or architectures, or focus on designing loss-tolerant decoders \cite{rohde2007error, stace2009thresholds,barrett2010fault,fortescue2014fault,baker2021exploiting,bartolucci2023fusion, coble2025correction, gu2025fault,bartolucci2025comparison,kobayashi2025erasure}. 
However, to the best of our knowledge, for general codes, this question has not been formalized yet, and no prior work addresses it, even in the non-adaptive setting.

Adaptivity is again preferred to reduce the time overhead. It is also possible to adaptively select the next stabilizer measurement, as mentioned in \cite{berthusen2025adaptive}; this is in fact the only form of adaptivity required in our protocol. We expect that not only the syndrome information but also the erasure information can be leveraged here. Motivated by this, we integrate adaptive erasure error correction into the entire measurement sequence and design syndrome measurement protocols, generalizing the framework of \cite{tansuwannont2023adaptive} from the standard Pauli error model to the mixed error model.

This paper is organized as follows: Section~\ref{sec:adaptive_eec_related} reviews related work. In Section~\ref{sec:prelim}, we establish some preliminaries. In Section~\ref{sec:adaptive_eec}, we formulate adaptive erasure error correction: for prime-dimensional qudits, we reduce the problem of minimizing the number of additional stabilizer measurements to a subgroup-dimension problem and study certain syndrome extraction gadgets; for arbitrary composite dimensions, we provide the explicit construction of a canonical generating set under bipartition. In Section~\ref{sec:correction}, we show how to find a correction operator based on a usable syndrome string and the erasure pattern, and generalize the weak and strong FTEC conditions from the standard Pauli error model to the mixed error model. In Section~\ref{sec:adaptive_protocol}, we present adaptive syndrome-measurement protocols in the mixed model.

\subsection{Related work\label{sec:adaptive_eec_related}}

\paragraph{Adaptive erasure error correction.}
\cite{berthusen2025adaptive} uses a similar approach to reduce the number of measurements for the Pauli-error model. The protocol is designed for concatenated codes, of which the inner code is a small error-detection code that helps locate error locations in the outer code. Stabilizer generators supported only on blocks with no errors do not need to be measured.

Recently, \cite{matsumoto2025reducing} showed that the minimum number of measurements for correcting erasures is linked to the dimension of the subgroup of the stabilizer group that is supported only on remaining data qubits, motivated by quantum local recovery codes. However, how to correct erasures in a fault-tolerant way has not been addressed. In fault-tolerant syndrome extraction gadgets, the problem should be stated in a Hilbert space larger than the code space. We have also noticed that the key idea of reducing the number of stabilizer measurements is in fact more related to the well-known fact about the correctability of erasure errors, as discussed in early work like \cite{delfosse2012upper}.  In \cite{matsumoto2025reducing}, Remark 2 briefly points out the possibility of the distinction between qudits of prime dimension and those whose dimension is a prime power; however, qudits of composite dimension are not studied in that work. Remark 6 explains that by measuring the reduced subset of generators, only erasures can be corrected; the rest of the generator measurements are still required to distinguish Pauli errors even in quantum error correction. This, in fact, indicates that the advantage of erasure error correction does not hold under the mixed error model by only reducing the number of measurements, unless we carefully design a protocol that can handle both Pauli errors and erasures. Their results are therefore not directly comparable to this section, and we will further discuss how to design the entire FT syndrome measurement protocol.

\paragraph{Mixed error correctability.}
The textbook criterion for error correctability states that an $[[n,k,d]]$ stabilizer code can correct up to $t=\lfloor (d-1)/2 \rfloor$ Pauli errors and up to $d-1$ erasures, respectively. 
However, little attention has been drawn to the correctability of errors that are combinations of erasures and Pauli errors. An early theoretical result \cite[Theorem 2]{haselgrove2006trade} states that the correctability of $t$ Pauli errors of a stabilizer code is equivalent to the correctability of all possible combinations of $2m$ erasures and $t-m$ Pauli errors for any integer $0\leq m\leq t$. The FTEC conditions can be modified according to this result, and our syndrome extraction protocols will be designed based on it.

\paragraph{Shor-style sequences.} In the standard Pauli error model, if no internal Pauli fault occurs, measuring a full set of generators yields the right syndrome string $s$, as well as the right correction operator $\mathcal{E}(s)$. However, internal Pauli faults can cause the syndrome extraction to fail. The traditional Shor-style syndrome extraction \cite{shor1996fault} performs multiple rounds of syndrome string extraction, until the syndrome strings from $t+1$ consecutive rounds are identical, where $t=\lfloor (d-1)/2\rfloor $ is the maximum weight of the Pauli error that can be corrected by a distance-$d$ code. We say such a syndrome string is \emph{usable}, since the corresponding round of syndrome extraction is guaranteed to be unaffected by internal Pauli faults. In the worst case, a single fault occurs every $t+1$ rounds, and $(t+1)^2$ rounds of syndrome string extraction are required.

Adaptive FTEC protocols satisfying the strong and weak FTEC conditions for a stabilizer code of any distance are given in \cite{tansuwannont2023adaptive}, with the maximum number of rounds being $(t+3)^2/4-1$ and $(t+3)^2/4-2$, respectively. In these protocols, the difference between syndrome strings from two consecutive rounds is used to identify a round in which no Pauli fault occurs. Following this line of work, we extend and adapt the idea to the mixed error model.

\section{Preliminaries\label{sec:prelim}
}
\subsection{Stabilizer states and stabilizer codes}

An $n$-qubit system is defined on a $2^n$-dimensional complex Hilbert space $({\bC}^{2})^{\otimes n}$. A set of $n$-qubit commuting Pauli operators $\{g_i\}_{i=1}^r$ generates a group $\mathcal{S}$, called the \emph{stabilizer group}. Any $n$-qubit state $|\psi\rangle$ satisfies $g_i |\psi\rangle = |\psi\rangle$ for all $i$ is a \emph{stabilizer state}. $|\psi\rangle$ is \emph{stabilized} by the operators $\{g_i\}$. Equivalently speaking, $|\psi\rangle$ is on the joint $+1$ eigenspace of all generators. When $r=n$ and $\{g_i\}$ is a minimal generating set, the stabilizer group uniquely determines a stabilizer state.

Any such stabilizer group $\mathcal{S}$ that does not contain $-I$ defines an $n$-qubit stabilizer code $\mathcal{C}$. The normalizer of $\mathcal{S}$, denoted by $N(\mathcal{S})$, is the group formed by all $n$-qubit Pauli operators that commute with all elements in $\mathcal{S}$. $N(\mathcal{S})\setminus \mathcal{S}$ consists of all nontrivial logical operators of $\mathcal{S}$. The choices for nontrivial logical operators are not unique. Logical operators can be classified into cosets $[P]\coloneq P\mathcal{S} = \{PS: S\in \mathcal{S}\}$, and any two physical representatives $P_1, P_2 \in [P]$ that belong to the same coset are equivalent, considering their action on $\mathcal{C}$. Any two Pauli operators $E, F$ are equivalent under stabilizer operations, $E\sim F$, if there exists a stabilizer $S\in \mathcal{S}$ satisfying that $ES=F$ \cite{gottesman2009introduction}.

These rules can be extended to qudits (see \cite{gottesman1998fault, gheorghiu2014standard, sarkar2024qudit, eczoo_qudit_stabilizer} for reference). An $n$-qudit system is defined on a $d^n$-dimensional complex Hilbert space $({\bC}^d)^{\otimes n}$. 
Correspondingly, the generalized single-qudit Pauli group is generated by $X_d = \sum_{i=0}^{d-1} |i+1\rangle\langle i|$ and $Z_d = \sum_{i=0}^{d-1}\omega^i |i\rangle\langle i|$, where $\omega = \exp{\frac{2\pi i}{d}}$, the $d$-th root of unity:
\[
\overline{\mathcal{P}}_d = \{\omega^a X_d^rZ_d^s: 0\leq r,s\leq d-1\}.
\]

Elements of $\overline{\mathcal{P}}_d$ satisfy the commutation relation
\[
[ X_d^r Z_d^s, X_d^t Z_d^u ] = \omega^{st-ru}
I.
\]

where $[\cdot]$ is the commutator from group theory. Following the notations of \cite{sarkar2024qudit}, $\llbracket X_d^r Z_d^s, X_d^t Z_d^u   \rrbracket$ refers to $st-ru$, the phase exponent. $\llbracket X_d^r Z_d^s, X_d^t Z_d^u   \rrbracket = 0$ iff the two elements commute, which may be more familiar to readers who are used to the Pauli operator commutator $[A,B]:=AB-BA$. 

$n$-qudit Pauli products take the form
\[
\omega^a X^{\vec{r}}Z^{\vec{s}} = \omega^a X_1^{r_1} Z_1^{s_1} \otimes X_2^{r_2} Z_2^{s_2} \otimes \cdots \otimes X_n^{x_n} Z_n^{z_n}
\]

and form the generalized $n$-qudit Pauli group $\overline{\mathcal{P}}_{d,n}$. If we ignore the phase factor, the set of all $n$-qudit Pauli operators $\mathcal{P}_{d,n}$ has one-to-one correspondence to the quotient group $\overline{\mathcal{P}}_{d,n}/ \{\omega^i I:i\in \mathbb{N
} ,0\leq i\leq d-1 \}$.

The commutation relation for any two elements of $\overline{\mathcal{P}}_{d,n}$ is 
\[
[\omega^a X^{\vec{r} } Z^{\vec{s}}, \omega^b X^{\vec{t}} Z^{\vec{u}} ] = \omega^{\vec{s}\cdot \vec{t} -\vec{r}\vec{u}}  I
\]
where $I$ is always the identity matrix on the entire system.

For qudits with prime dimension, the rank of a Pauli subgroup is simply the size of its basis. In arbitrary composite dimensions, the structure is more naturally formulated in the language of modules, where the ``rank'' of the submodule is instead defined as the size of its minimal generating set. 
A stabilizer group $\mathcal{S}$ is a subgroup that does not contain $\{\omega^a I : a\neq 0\}$. We direct the interested reader to \cite{sarkar2024qudit} for a detailed discussion of qudits of composite dimension, as well as the ring theory background and other notations we shamelessly adopt from their work but do not restate here due to limited time.  

The Hadamard gate is generalized to the Fourier gate
\[
F = \frac{1}{\sqrt{d}} \sum_{i,j=0}^{d-1} \omega^{ij}|j\rangle \langle i|.
\]

The generalized CNOT gate, namely the SUM gate in \cite{gottesman1998fault}, \raisebox{-.1cm}{\includegraphics[scale=0.6]{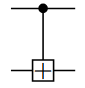}} is defined by
\[
\operatorname{CNOT} = \sum_{i,j=0}^{d-1} |i\rangle\langle i|\otimes |(i+j) \operatorname{mod} d\rangle \langle j|.
\]

We will omit the subscript $d$. 

If $\{g_i\}$ is a minimal generating set of $\mathcal{S}$, the number of logical qubits that $\mathcal{C}$ can encode is given by $k=n-r$. The distance $d$ of code $\mathcal{C}$ is further defined as the minimum weight of a nontrivial logical operator. This qubit stabilizer code is denoted as $[[n,k,d]]$. Correspondingly, a qudit stabilizer code with such parameters is denoted as $[[n,k,d]]_q$.

A minimal set of generators can be arranged as rows of a matrix, forming a \emph{stabilizer tableau}, a matrix representation of quantum states and operations. A code state of $\cC$ is stabilized by $n-k$ rows of generators. 

To handle phase factors, we may define the quotient group $\mathcal{S}/K $ with $K:=\{\omega^aI:0\leq a\leq q-1\}$ by ignoring the phases of the generators, and $K\mathcal{S}:=\{\omega^ag: g\in \mathcal{S}, 0\leq a\leq q-1\}$. For qubits and prime-dimensional qudits, $K\mathcal{S}$ is a group.

\begin{definition}
    The local subgroup of $\mathcal{S}$ on a given subset $L\subset [n]$ consists of all elements that are supported only on $L$ 
    \[
    \mathcal{S}^L := \{s\in \mathcal{S}: \operatorname{supp} s \subseteq  L\},
    \]
    where $\operatorname{supp}s$ is the support of a stabilizer $s\in \mathcal{S}$, namely the set of qub(d)its where $s$ acts nontrivially.

    $\mathcal{S}^\emptyset$ is trivial and $\dim(\mathcal{S}^\emptyset) = 0$.
\end{definition}

Two important tools used in this work are the canonical form of stabilizer generators and the cleaning lemma. 

\begin{definition}[The canonical form of generators for a stabilizer group with respect to a given bipartition \cite{fattal2004entanglement}]\label{def:canonical} 
For a bipartite 
state stabilized by $\mathcal{S}$ on the Hilbert space $\mathcal{H}_A\otimes \mathcal{H}_B$, $\mathcal{S}$ can be split into two local subgroups $\mathcal{S}^A$, $\mathcal{S}^B$, and a remaining subgroup $\mathcal{S}^{AB}$ whose elements are not contained in either local subgroup. Corresponding generator sets of these subgroups in a canonical form can be written as
\[
    \mathcal{S}^A = \langle a_i \otimes I_B\rangle , \mathcal{S}^B = \langle I_A \otimes b_j \rangle, 
    \mathcal{S}^{AB} = \langle g_k \otimes \overline{g}_k, h_k \otimes \overline{h}_k\rangle,
\]
where $\mathcal{S}^A$ (or $\mathcal{S}^B$) is the local subgroup 
on $A$ (or $B$), $\mathcal{S}^{AB}=\mathcal{S}\setminus (\mathcal{S}^A\sqcup \mathcal{S}^B)$ is generated by $\{(g_k \otimes \overline{g}_k, h_k \otimes \overline{h}_k)\}$, generators in pairs such that each of $g_k \otimes \overline{g}_k, h_k \otimes \overline{h}_k$ act nontrivially on both subsystems, and their local components anticommute on both $A$ and $B$, while commuting with all other generators of $\mathcal{S}$.

$\mathcal{S}$ is generated by 
\[
    \mathcal{S} = \langle  a_i \otimes I_B, I_A \otimes b_j, g_k \otimes \overline{g}_k, h_k \otimes \overline{h}_k\rangle. 
\]
\end{definition}

By abuse of notation, we also use $\mathcal{S}^L$ to denote its restriction on $L$ or its extension to a larger Hilbert space, depending on the context. We may avoid explicitly choosing a set of generators for the local groups by writing the stabilizer tableau as follows:

    \[
        \begin{array}{c}
    \overbracket{\phantom{sssss}}^{A} \phantom{ss}\overbracket{\phantom{sssss}}^{B}\\
    \begin{array}{ccc}
      \mathcal{S}^A & &I\\
      &\mathcal{S}^{AB}  & \\
      I &  & \mathcal{S}^B 
    \end{array}
    \end{array}
    \]

This was provided in the qubit case, but naturally extends to qudits of prime dimension. The authors of \cite{fattal2004entanglement} also remark that a generalization to qudits of composite dimension should be possible. There has been a line of related work\cite{fattal2004entanglement, zhong2021entanglement, looi2011tripartite} on bipartite or multipartite canonical decomposition for qubits, prime and square-free dimensional qudits. However, to our knowledge, an explicit construction of canonical generating sets for qudits of arbitrary composite dimension based on the partition has not appeared in the literature, and prior work has encountered difficulties. As recently highlighted in \cite{sarkar2024qudit}, when the composite dimension is not square-free, the maximal collections of non-commuting pairs fail to generate the full Pauli group. This helps explain why group-theoretic construction methods meet difficulties in the non-square-free composite case. We quote the following lemma that we will use to provide such a construction.

\begin{lemma}[Lemma~6.11 in \cite{sarkar2024qudit}]\label{lem:qudit6.11}

Let $S=\{s_0,s_1,\dots,s_{k-1}\}\subseteq\overline{\mathcal{P}}_n $ and let $A_{ij}=\llbracket s_i,s_j\rrbracket_d$ define a matrix $A\in\mathbb{Z}_d^{k\times k}$. The group $G:=\langle S\rangle$ has a Gram-Schmidt generating set $S_1\cup S_2\cup U$ where $|S_1|=|S_2|=\Theta(M_A)/2$, $M_A$ is the submodule of $\mathbb{Z}_d^k$ generated by the columns of $A$, and moreover there does not exist a Gram-Schmidt generating set with smaller $S_1$ or $S_2$.
    
\end{lemma}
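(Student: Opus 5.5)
The statement is a symplectic Gram--Schmidt (alternating normal form) result over the ring $\mathbb{Z}_d$. I read $\Theta(M)$ as the size of a minimal generating set of the module $M$. A \emph{Gram--Schmidt generating set} is a generating set $S_1\cup S_2\cup U$ with $S_1=\{x_i\}$ and $S_2=\{z_i\}$ such that $\llbracket x_i,z_j\rrbracket=0$ for $i\neq j$, $\llbracket x_i,z_i\rrbracket\neq 0$, all other commutator exponents among the pairs vanish, and every element of $U$ commutes with everything.

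The plan translates everything into linear algebra over $\mathbb{Z}_d$. The bracket is bi-additive, $\llbracket ss',t\rrbracket=\llbracket s,t\rrbracket+\llbracket s',t\rrbracket$, and antisymmetric, and phases do not affect it. Replacing $S$ by a new generating set $S'$ of the same size, with $s'_i=\prod_j s_j^{P_{ij}}$ for an invertible $P\in GL_k(\mathbb{Z}_d)$, therefore changes $A$ to $PAP^{T}$. The column module $M_A$ changes to $PM_A$, which is isomorphic, so $\Theta$ is unchanged.

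\textbf{Existence.} It suffices to show that an alternating matrix over $\mathbb{Z}_d$ is congruent to a block-diagonal form $\bigoplus_i \begin{pmatrix}0&a_i\\-a_i&0\end{pmatrix}\oplus 0$ with the number of nonzero blocks equal to $\Theta(M_A)/2$. The blocks then give the pairs $S_1,S_2$, and the zero part gives $U$.

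I would first work over a local ring $\mathbb{Z}_{p^e}$ and reduce as follows.
\begin{enumerate}
\item Pick an entry $A_{ij}$ of minimal $p$-adic valuation; it divides every other entry.
\item Reorder so that this entry sits at position $(1,2)$.
\item Use $s_1,s_2$ to clear all other entries in rows and columns $1,2$. Concretely, replace $s_l$ by $s_l s_1^{\alpha}s_2^{\beta}$, where $\alpha,\beta$ are obtained by dividing by $A_{12}$; this is possible by minimality of the valuation.
\item Recurse on the remaining block.
\end{enumerate}
The result is a congruent matrix whose column module is $\bigoplus_i (a_i\mathbb{Z}_{p^e})^2$. Its minimal number of generators is exactly twice the number of nonzero blocks; for instance, reducing mod $p$ gives the dimension count via Nakayama.

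For general $d=\prod p^{e_p}$, use $\mathbb{Z}_d\cong\prod\mathbb{Z}_{p^{e_p}}$. Do the reduction prime by prime and recombine with CRT idempotents: replace $s$ by $s^{c_p}$ with $c_p\equiv 1 \bmod p^{e_p}$ and $c_p\equiv 0$ modulo the other prime powers. Since $\Theta$ over $\mathbb{Z}_d$ equals the maximum over $p$ of the local $\Theta$, the primes with fewer blocks are padded with "blocks" whose $a_i$ is $0$ at that prime. The number of pairs is then $\max_p(\text{local block count})=\Theta(M_A)/2$.

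\textbf{Optimality.} Take any Gram--Schmidt generating set with $m$ pairs. It generates the same group $G$ as $S$, though possibly with a different number of elements, so the change-of-generators argument above does not apply directly. I would handle this by comparing both sets to the symplectic (commutator) structure of $G$ itself. The pairing on $G$ given by $\llbracket\cdot,\cdot\rrbracket$ has image module $\{\llbracket g,\cdot\rrbracket\}$, and for any generating set this image is the column module of that set's Gram matrix, up to isomorphism.

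For the Gram--Schmidt set, the Gram matrix is block diagonal, so its column module is generated by at most $2m$ vectors. Hence $\Theta(M_A)\le 2m$, i.e.\ $m\ge\Theta(M_A)/2$.

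The step I expect to be the main obstacle is this invariance: showing that the minimal generator count of the column module is independent of which generating set of $G$ is chosen, including sets of different sizes and sets containing elements with redundant relations. I would prove it by passing to $G/Z(G)$ and identifying the column module with the image of $G/Z(G)\to\operatorname{Hom}(G,\mathbb{Z}_d)$. A secondary concern is making the CRT recombination preserve both the pairing structure and the generation of $G$.
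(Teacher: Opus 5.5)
Your proposal is correct. The paper does not prove this lemma itself; it imports it from \cite{sarkar2024qudit}. The only thing it records about the construction is that the Gram--Schmidt set comes from an invertible change of generators $L\in\mathbb{Z}_d^{k\times k}$, $s_i'=\prod_j s_j^{L^{-1}_{ij}}$. That is exactly your congruence $A\mapsto PAP^{T}$ to alternating block-diagonal form, so your existence argument (minimal-valuation pivoting over each $\mathbb{Z}_{p^{e_p}}$, then CRT) fits what the paper indicates, and your optimality bound $\Theta(M_A)\le 2m$ is the right one.

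The step you flag as the main obstacle is short. The map $g\mapsto(\llbracket s_i,g\rrbracket)_{i}$ is a homomorphism $G\to\mathbb{Z}_d^k$. Its image is generated by the images of the $s_j$, i.e.\ the columns of $A$, so it equals $M_A$; its kernel is the center $Z(G)$. Hence $M_A\cong G/Z(G)$ for every generating set, whatever its size or redundancies, and optimality follows at once.

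Your secondary concern about CRT is settled most cleanly at the matrix level. Since $GL_k(\mathbb{Z}_d)\cong\prod_p GL_k(\mathbb{Z}_{p^{e_p}})$, choose each $P_p$ with its nonzero blocks in the leading positions $(1,2),(3,4),\dots$ and glue them into one $P$. Only then set $s_i'=\prod_j s_j^{P_{ij}}$; pivot exponents taken mod $p^{e_p}$ are not meaningful on the $s_j$ themselves.

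One technicality for even $d$: $s^d$ can equal $-I$, so the new set generates $G$ only up to central phases. Those phases lie in $G$, so adjoining them to $U$ fixes this without changing $|S_1|$ or $|S_2|$.
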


The construction of the Gram-Schmidt generating set can be found in the proof of this lemma in \cite{sarkar2024qudit}.

\begin{lemma}[The cleaning lemma \cite{bravyi2009no} 
]\label{lemma:cleaning}
    Suppose the code $\cC$ has at least one nontrivial logical operator and $U\subset [n]$ is an arbitrary subset of qubits. One of the following is true: 
    \begin{enumerate}
        \item There is a nontrivial logical operator $P\in N(\mathcal{S})\setminus \mathcal{S}$ that is supported entirely in $U$, or
        \item for any $[P]\in N(\mathcal{S})/\mathcal{S}$, there exists a physical representative $P'\in [P]$ such that $P'$ acts trivially on qubits of $U$. $PP'$ can be written as a product of stabilizer generators $\prod_i s_i$ such that for each $i$, the support of $s_i$ overlaps with $U$. 
    \end{enumerate}
\end{lemma}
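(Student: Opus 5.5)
We prove the cleaning lemma (Lemma~\ref{lemma:cleaning}) via the symplectic representation over $\mathbb{F}_2$, ignoring phases. Let $V=\mathbb{F}_2^{2n}$ carry the symplectic form $\langle\cdot,\cdot\rangle$. Let $S\subseteq V$ be the image of $\mathcal{S}$; then $S^\perp$ is the image of $N(\mathcal{S})$ and $S\subseteq S^\perp$. For $U\subset[n]$, write $V_U$ and $V_{\bar U}$ for the vectors supported on $U$ and on its complement $\bar U$, respectively, so that $V=V_U\oplus V_{\bar U}$. Each summand is symplectic and the two are orthogonal to each other. Let $\pi_U$ denote the projection onto $V_U$.

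Assume (1) fails, i.e.\ $S^\perp\cap V_U\subseteq S$. We want every $P\in S^\perp$ to equal $P'+T$ with $P'\in S^\perp\cap V_{\bar U}$ and $T$ a sum of generators overlapping $U$. Take a minimal generating set of $S$. After row reduction, put it in the form of Definition~\ref{def:canonical} for the bipartition $(U,\bar U)$:
\begin{itemize}
\item generators $a_i$ of $S^U$, supported in $U$;
\item generators $b_j$ of $S^{\bar U}$, supported in $\bar U$;
\item mixed generators $g_k,h_k$ acting nontrivially on both sides.
\end{itemize}
Let $T_U=\operatorname{span}\{a_i,g_k,h_k\}$, the span of the generators overlapping $U$. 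The claim reduces to showing $\pi_U(P)\in\pi_U(T_U)$. Indeed, if $\pi_U(P)=\pi_U(T)$ for some $T\in T_U$, then $P'=P+T$ vanishes on $U$. It also lies in $S^\perp$, since $T\in S\subseteq S^\perp$.

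The key step is a dimension-counting duality. We have $\pi_U(P)\in\pi_U(T_U)$ iff $\pi_U(P)$ is orthogonal to $\pi_U(T_U)^{\perp_U}$, the orthogonal complement inside $V_U$, because the form restricted to $V_U$ is nondegenerate. So take $Q\in V_U$ with $Q\perp\pi_U(T_U)$. Since $Q$ is supported on $U$, $\langle Q,t\rangle=\langle Q,\pi_U(t)\rangle$ for every $t$. Hence $Q$ is orthogonal to all $a_i,g_k,h_k$, and it is trivially orthogonal to the $b_j$, so $Q\in S^\perp\cap V_U$. Failure of (1) then gives $Q\in S$, and since $Q\in V_U$ we get $Q\in S^U$. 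Finally $P\in S^\perp$, so $\langle \pi_U(P),Q\rangle=\langle P,Q\rangle=0$. This gives the required orthogonality and hence the decomposition.

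If (1) holds, we are done directly. The dichotomy ``one of the following'' needs only this implication, not exclusivity. The identity $PP'=\prod_i s_i$ with each $s_i$ overlapping $U$ is exactly $T\in T_U$. Two conventions should be checked. First, generators that are mixed but not in canonical pairs must still be counted in $T_U$; the canonical form ensures the relevant span is that of all generators overlapping $U$. Second, phases are irrelevant because both $PP'$ and the product of the $s_i$ lie in $\mathcal{S}$ up to phase.

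The main obstacle is the nondegeneracy step: identifying $(\pi_U(T_U))^{\perp}\cap V_U$ with a subset of $S^\perp\cap V_U$. Over $\mathbb{F}_2$, or $\mathbb{F}_p$, this is immediate. For composite $d$ the double-perp identity in $\mathbb{Z}_d^{2|U|}$ still holds, since the form is perfect, but it would be justified via Lemma~\ref{lem:qudit6.11} if needed.
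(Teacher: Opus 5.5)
Your proof is correct. The paper gives no proof of this lemma and only cites \cite{bravyi2009no}. Your argument is essentially the standard one from that reference, written in symplectic language. When no nontrivial logical operator is supported on $U$, the operators on $U$ that commute with the restricted group $\pi_U(S)$ are exactly $S^U$. Since $P\in S^\perp$, $\pi_U(P)$ is orthogonal to $S^U$, so it lies in $(S^U)^{\perp_U}=\pi_U(S)$; this is the double-commutant step.

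One simplification is worth making. You never use the pair structure of Definition~\ref{def:canonical}. The only property of the generating set you need is that every generator not overlapping $U$ is supported on $\bar U$. That holds for an arbitrary generating set, so $\pi_U(T_U)=\pi_U(S)$ automatically, and you can drop the row reduction entirely. This has two benefits:
\begin{enumerate}
\item It removes the need to justify a canonical form for a code rather than a state. Definition~\ref{def:canonical} is stated only for stabilizer states. The form does exist here when (1) fails, because the radical of $\pi_U(S)$ is then $\pi_U(S)\cap(S^\perp\cap V_U)=S^U$, but you did not argue this.
\item It gives the stronger form of (2), in which the $s_i$ come from whatever generating set is fixed in advance, e.g.\ the local generators. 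This is how the lemma is used in \cite{bravyi2009no}.
\end{enumerate}
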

A direct consequence of this lemma is that for any subset $U$ that refers to a correctable set of erasures, any logical operator $P$ can be cleaned off of $U$, namely, there exists some $P' \in [P]$ supported only on $[n]\setminus U$. The cleaning lemma also extends to qudits.

\subsection{Quantum error correction and fault tolerance}

Suppose we are working with $q$-dimensional qudits, where $q\geq 2$ and $q$ is prime, thus including the qubit case. 
The joint $+1$ eigen subspace is called the \emph{ideal syndrome subspace} or the \emph{zero syndrome subspace}. Given a minimal set of generators of the stabilizer group $\mathcal{S}$, the labels of the eigen subspaces, which are numbers from the $q$-th root of unity $\{\omega^a: 0\leq a\leq q-1\}$, form a syndrome string. Most frequently people use $a$ to represent a syndrome qit, but we will be flexible about how the classical syndrome information is represented. A nontrivial Pauli error $E$ maps a code state to a non-ideal syndrome subspace.

The choice of the subspaces depends on the way to obtain $s$. 
Any minimal generating set of $\mathcal{S}$ uniquely defines a homomorphic syndrome map $\sigma: \mathcal{P}_n\rightarrow \mathbb{F}_q^{n-k}$ for an $[[n,k,d]]_q$ code $\mathcal{C}$. 

Suppose we use the minimal generating set $\{g_1, \cdots, g_{n-k}\}$, the syndrome map $\sigma(\cdot)$ is defined as 
\[
\sigma(E) = (s_1, s_2, ..., s_{n-k}),
\]
where 
$s_i\in \{\omega^a : 0\leq a\leq q-1\}$  
represents the corresponding eigenvalue of $g_i$'s eigen subspace.

For future use, we define $\vec{s}_1+ \vec{s}_2$ as 
the bitwise addition modulo $q$ for higher-dimensional systems, and $\vec{s}_1 \circ \vec{s}_2$ as the concatenation of $\vec{s}_1$ and $\vec{s}_2$.

Different Pauli operators may produce the same syndrome strings. We can decompose any Pauli error $E$ as a product of three terms
\[
E = L \cdot S\cdot D_{\vec{s}},
\]
where $L\in N(\mathcal{S})$ is a logical operator, $S\in \mathcal{S}$ is a stabilizer and $D_{\vec{s}}$, called a \emph{destabilizer} (or sometimes \emph{reduced error}) satisfies that $\sigma(D_{\vec{s}}) = \sigma(E) = \vec{s}$. $D_{\vec{s}}$ is the Pauli operator that has the minimum weight among all Pauli operators that produce the syndrome $\vec{s}$ and is uniquely determined by $\vec{s}$. Regarding the first two terms, $L$ can be a nontrivial logical error but always goes undetected; $S$ is a stabilizer and hence is never a Pauli error. Both produce trivial syndromes. Therefore, errors can be divided into equivalence classes according to $s$. The equivalence relation is denoted by $\sim$. We can define the reduced error map $\mathcal{E}(\cdot ): \mathbb{F}_q^{n-k}\rightarrow \mathcal{P}_n$ as
\[
\mathcal{E}(\vec{s}) = \operatorname*{arg\,min}_{E \in \mathcal{P}_n / \{\pm1,\pm i\}}
\left\{ \mathrm{wt}(E) : \sigma(E) = \vec{s} \right\} = D_s,
\]
where $\mathrm{wt}(E)$ denotes the Pauli weight of $E$. We will use $\mathcal{E}(\vec{s})$ and $D_{\vec{s}}$ interchangeably.

Here are some natural properties of the syndrome maps and the reduced error maps:
\begin{definition}
    Given a fixed set of generators and corresponding maps $\mathcal{E}, \sigma$, for any two syndrome strings $\vec{s}_1, \vec{s}_2\in \mathbb{F}_q^n$, we have 
    \[
        \mathcal{E} (\vec{s}_1 + \vec{s_2} ) \sim \mathcal{E} (\vec{s}_1) \cdot \mathcal{E}(\vec{s}_2),
    \]
    and for any two Pauli errors $E_1, E_2$, we have
    \[
    \sigma(E_1\cdot E_2) =\sigma(E_1)+ \sigma(E_2). 
    \]
\end{definition}

The QEC process consists of measuring the syndrome $s$ and applying the correction operator $\mathcal{E}(s)$. Stabilizer measurements with ancillary qubits extracts the syndrome information without disturbing the code state. 
However, the process of performing quantum error correction can be faulty, and internal Pauli faults that occur on one qudit may propagate to other qudits, resulting in a higher-weight error on data qudits. 
A fault-tolerant QEC process must satisfy the requirement that internal faults do not propagate into high-weight errors; this requirement is formalized as \emph{fault-tolerant error correction (FTEC) conditions}.

There are two variants of FTEC conditions, strong FTEC conditions \cite{aliferis2005quantum} and weak FTEC conditions \cite{delfosse2020short}.
They are needed in different settings, 
depending on whether errors are assumed to be handled across multiple levels of code concatenation.

\begin{definition}[Strong FTEC conditions {\cite[Restatement of Definition 1]{tansuwannont2023adaptive}}]
An error correction protocol is \emph{strongly $t$-fault} for an $[[n,k,d]]$ stabilizer code ($t\leq \lfloor\frac{d-1}{2}\rfloor$) if the following two conditions are satisfied:
\begin{enumerate}
    \item \textbf{Error correction correctness property (ECCP)}: If the input Pauli error has weight $r$ and $s$ internal Pauli faults occur during the protocol with $r+s\leq t$, then an ideal decoder outputs the same codeword whether it is applied to the input state or to the output state of the protocol.
    \item \textbf{Error-correction recovery property (ECRP):} Regardless of the weight of the input Pauli error, if $s$ internal Pauli faults occur during the protocol with $s \le t$, then the output state differs from a valid codeword by a Pauli error of weight at most $s$.

\end{enumerate}
\end{definition}

\begin{definition}[Weak conditions for fault-tolerant error correction {\cite[Restatement of Definition 2]{tansuwannont2023adaptive}}]
An error correction protocol is \emph{weakly $t$-fault tolerant} for an $[[n,k,d]]$ stabilizer code ($t\leq \lfloor\frac{d-1}{2}\rfloor$) if the following two conditions are satisfied:
	\begin{enumerate}
		\item \textbf{ECCP:} If the input Pauli error has weight $r$ and $s$ internal Pauli faults occur during the protocol with $r+s\leq t$, then an ideal decoder outputs the same codeword whether it is applied to the input state or to the output state of the protocol.
		\item \textbf{ECRP:} If the input Pauli error has weight $r$ and $s$ internal Pauli faults occur during the protocol with $r+s\leq t$, the output state differs from a valid codeword by a Pauli error of weight at most $s$.
	\end{enumerate}
	
	\label{def:weak_FT}
\end{definition}

However, these conditions do not directly apply to the mixed error model. We will later construct new conditions based on the following result about the mixed error correctability from \cite{haselgrove2006trade}.

\begin{definition}
    For any stabilizer code, 
    let $(e,p)$ be a pair of nonnegative integers representing a combination of $e$ erased qubits and $p$ single-qubit Pauli errors. We define the weight of such a combination as 
    \[\operatorname{wt}(e,p) = \frac{1}{2}e+p.\]
\end{definition}

Note that for a Pauli operator $E$, $\operatorname{wt}(E)$ still refers to its weight.

\begin{theorem}[The equivalence between the Pauli error correctability and the mixed error correctability {\cite[Restatement of Theorem 2]{haselgrove2006trade}}]\label{thm:trade-off-mixed-correctability}
    For any quantum error correcting code $\cC$, the following are equivalent:
    \begin{enumerate}
        \item $\cC$ can correct up to $t$ arbitrary single-qubit Pauli errors.
        \item $\cC$ can correct a combination of $e$ erased qubits and $p$ arbitrary single-qubit Pauli errors if $\operatorname{wt}(e,p)\leq t$, regardless of whether the Pauli errors act on the erased qubits.
    \end{enumerate}
\end{theorem}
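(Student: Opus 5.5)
The plan is to use the Knill--Laflamme condition. A code corrects an error set $\mathcal{E}$ iff $P E_a^\dagger E_b P = c_{ab} P$ for all $E_a,E_b\in\mathcal{E}$, where $P$ is the code projector. For a stabilizer code it suffices to check this on a Pauli basis of each error set. Then correctability of $\mathcal{E}$ reduces to a condition on products: every product $E_a^\dagger E_b$ must lie outside $N(\mathcal{S})\setminus \mathcal{S}$, possibly up to phase. I will therefore reduce both statements to conditions on the supports of such products and compare them via the code distance $d$.

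\textbf{Reformulate.} Statement 1 says that every Pauli $E_a^\dagger E_b$ with $\operatorname{wt}\le 2t$ is not a nontrivial logical operator. Equivalently, $d\ge 2t+1$, or more generally this is the ``degenerate'' version: no element of $N(\mathcal{S})\setminus\mathcal{S}$ has weight $\le 2t$. For statement 2, fix an erasure location set $R$ with $|R|=e$. Since the erased locations are known, the error set to be corrected is $\mathcal{E}_R=\{A\cdot B\}$, where $A$ is an arbitrary operator on $R$ and $B$ is a Pauli of weight $\le p$ anywhere, including on $R$. Expanding $A$ in the Pauli basis, the products $E_a^\dagger E_b$ are (up to phase) Paulis supported on $R\cup \operatorname{supp}B_a\cup\operatorname{supp}B_b$. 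Such a support has size at most $e+2p$.

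\textbf{Direction $1\Rightarrow 2$.} If $\operatorname{wt}(e,p)\le t$, then $e+2p\le 2t$. So every product above has weight $\le 2t$ and, by statement 1, is not a nontrivial logical operator. The Knill--Laflamme condition then holds for $\mathcal{E}_R$.

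\textbf{Direction $2\Rightarrow 1$.} Take $m=0$, i.e.\ $e=0$ and $p=t$. This gives statement 1 immediately. (For $p=t$ Paulis the weight is exactly $t$.)

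The main obstacle is making the reduction from arbitrary (non-Pauli) erasure action to the Pauli product condition rigorous. The tools are linearity of Knill--Laflamme in the error operators and the fact that $PQP$ is either $0$, $\propto P$, or a nontrivial logical for a Pauli $Q$. A further subtlety is the phrase ``regardless of whether the Pauli errors act on erased qubits'': Paulis on $R$ are absorbed into $A$, which only lowers the support bound. This makes the worst case the one with disjoint supports, so the counting $e+2p$ still applies. For qudits the same argument goes through with generalized Paulis forming an operator basis.
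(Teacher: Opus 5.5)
Your argument is correct. It is the standard one: the paper gives no proof of its own here and simply cites Haselgrove--Rohde, whose argument is exactly this Knill--Laflamme support count. Because the erasure pattern $R$ is heralded, the condition need only hold within each $\mathcal{E}_R$. Every product $E_a^\dagger E_b$ is then supported on $R\cup\operatorname{supp}B_a\cup\operatorname{supp}B_b$, of size at most $e+2p\le 2t$. The converse is the trivial specialization $e=0$, $p=t$.

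The only thing to tighten is scope. The statement covers an arbitrary code, but your reformulation of statement 1 as ``no element of $N(\mathcal{S})\setminus\mathcal{S}$ has weight $\le 2t$'' and your trichotomy for $PQP$ are specific to stabilizer codes. You do not need them. Every Pauli of weight at most $2t$ factors, up to phase, as $E_a^\dagger E_b$ with both factors of weight at most $t$. By linearity of the Knill--Laflamme condition, statement 1 is therefore equivalent to $PQP=c_QP$ for every operator $Q$ supported on at most $2t$ qubits. Your counting then gives $1\Rightarrow 2$ verbatim for any code.
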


\cite{haselgrove2006trade} was originally formulated for qubits; however, this theorem naturally extends to qudits, so we omit both the proof and its restatement in that setting.

\subsection{Losses, erasures and loss detection\label{sec:loss_erasures_and_loss_detection}}
Traditional QEC protocols have been mostly developed under the standard Pauli error model. In practice, realistic noise models include multiple non-Pauli error types, one of which is the loss of qubits or qudits. 
We use the terms \emph{detected loss} and \emph{erasure} interchangeably, since once the loss of a qudit is detected, its location is known.

There are two equivalent ways to mathematically model erasure errors and the replacement of erased qudits by fresh ancillas:
\begin{enumerate}
    \item A known subset of qudits is lost, replaced with an equal number of fresh ancillas. 
    \item Equivalently, the erased qudits are completely depolarized, and erasures may be modeled as a uniformly random Pauli error.
\end{enumerate}

However, as noted in Sec.~\ref{sec:Introduction}, we hold the opinion that it is better to treat losses, erasures, and located Pauli errors as three different types of errors. These three terms have been used interchangeably in early literature \cite{haselgrove2006trade, rohde2007error, lu2008experimental, ralph2005loss,stace2009thresholds,barrett2010fault, fortescue2014fault}.\footnote{Mostly because in the photonic community, ``loss tolerance'' is typically used to denote tolerance against heralded photon loss.} 
For terminological precision, here we identify \emph{detected losses} with erasures. Losses can be detected either intrinsically \cite{ralph2005loss} or via loss detection units \cite{perrin2025quantumerrorcorrectionresilient, baranes2025leveragingatomlosserrors}, depending on the architecture. 
Erasures and located Pauli errors should be distinguished as well, though they are considered equivalent in mathematical models \cite{grassl1997codes, wu2022erasure, sahay2023high}. We follow the first approach and represent the qudit replacement explicitly by adding extra rows and columns in the stabilizer tableau.

We also use several assumptions from \cite{perrin2025quantumerrorcorrectionresilient, baranes2025leveragingatomlosserrors}:
\begin{enumerate}
    \item Once a qudit is lost, all subsequent operations and measurements that act on this qudit are erased.
    \item The loss of a qudit is detected upon its measurement. A single-qudit measurement outcome is either an integer from $\{0,1,\cdots, d-1\}$, or a loss symbol.
    \item A qudit may be lost during entangling operations. 
\end{enumerate}

We refer the reader to these references for a more detailed analysis of loss detection units and loss-tolerant gadgets, including their fault tolerance to Pauli errors and the associated classical postprocessing.

The loss of unmeasured qudits has to be detected using methods such as applying loss detection units or modified circuits. There are two main types of loss detection units (LDUs): standard LDUs and teleportation-based LDUs. For syndrome extraction circuits, SWAP-based and teleportation-based modifications have been introduced in \cite{baranes2025leveragingatomlosserrors}.

\paragraph{Standard LDUs.} 
A standard loss detection unit (SLDU) is implemented using CNOT gates followed by an X gate on the data qubits. An SLDU requires two controlled gates, which increases the lifecycle of older qubits more than a teleportation-based LDU \cite{perrin2025quantumerrorcorrectionresilient}:

\begin{figure*}[htbp]
    \centering
\tikzset{every picture/.style={line width=0.75pt}} 
\resizebox{\textwidth}{!}{%
\begin{tikzpicture}[x=0.75pt,y=0.75pt,yscale=-1,xscale=1]

\draw    (345.25,73) -- (516,73) ;

\draw    (315.75,130) -- (462.5,130) ;

\draw    (389.75,65.75) -- (389.75,130) ;

\draw   (382.5,73) .. controls (382.5,69) and (385.75,65.75) .. (389.75,65.75) .. controls (393.75,65.75) and (397,69) .. (397,73) .. controls (397,77) and (393.75,80.25) .. (389.75,80.25) .. controls (385.75,80.25) and (382.5,77) .. (382.5,73) -- cycle ;

\draw  [fill={rgb, 255:red, 0; green, 0; blue, 0 }  ,fill opacity=1 ] (386.07,130) .. controls (386.07,127.97) and (387.72,126.32) .. (389.75,126.32) .. controls (391.78,126.32) and (393.43,127.97) .. (393.43,130) .. controls (393.43,132.03) and (391.78,133.68) .. (389.75,133.68) .. controls (387.72,133.68) and (386.07,132.03) .. (386.07,130) -- cycle ;

\draw  [fill={rgb, 255:red, 255; green, 255; blue, 255 }  ,fill opacity=1 ] (415.75,117) -- (444.75,117) -- (444.75,140) -- (415.75,140) -- cycle ;

\draw   (461,136.82) .. controls (461.09,127.79) and (467.85,120.5) .. (476.18,120.5) .. controls (484.44,120.5) and (491.16,127.66) .. (491.36,136.58) -- cycle ;

\draw    (463.5,146.5) -- (496.59,113.41) ;
\draw [shift={(498,112)}, rotate = 135] [color={rgb, 255:red, 0; green, 0; blue, 0 }  ][line width=0.75]    (10.93,-3.29) .. controls (6.95,-1.4) and (3.31,-0.3) .. (0,0) .. controls (3.31,0.3) and (6.95,1.4) .. (10.93,3.29)   ;

\draw    (477.68,74) -- (477.68,120.5)(474.68,74) -- (474.68,120.5) ;

\draw  [fill={rgb, 255:red, 255; green, 255; blue, 255 }  ,fill opacity=1 ] (461.75,60.5) -- (490.75,60.5) -- (490.75,83.5) -- (461.75,83.5) -- cycle ;

\draw    (635.25,74.5) -- (806,74.5) ;

\draw    (605.75,131.5) -- (752.5,131.5) ;

\draw    (708.75,139.18) -- (708.75,74.93) ;

\draw   (716,131.93) .. controls (716,135.93) and (712.75,139.18) .. (708.75,139.18) .. controls (704.75,139.18) and (701.5,135.93) .. (701.5,131.93) .. controls (701.5,127.92) and (704.75,124.68) .. (708.75,124.68) .. controls (712.75,124.68) and (716,127.92) .. (716,131.93) -- cycle ;

\draw  [fill={rgb, 255:red, 0; green, 0; blue, 0 }  ,fill opacity=1 ] (712.43,74.93) .. controls (712.43,76.96) and (710.78,78.61) .. (708.75,78.61) .. controls (706.72,78.61) and (705.07,76.96) .. (705.07,74.93) .. controls (705.07,72.9) and (706.72,71.25) .. (708.75,71.25) .. controls (710.78,71.25) and (712.43,72.9) .. (712.43,74.93) -- cycle ;

\draw  [fill={rgb, 255:red, 255; green, 255; blue, 255 }  ,fill opacity=1 ] (654.75,63.5) -- (683.75,63.5) -- (683.75,86.5) -- (654.75,86.5) -- cycle ;

\draw   (751,138.32) .. controls (751.09,129.29) and (757.85,122) .. (766.18,122) .. controls (774.44,122) and (781.16,129.16) .. (781.36,138.08) -- cycle ;

\draw    (753.5,148) -- (786.59,114.91) ;
\draw [shift={(788,113.5)}, rotate = 135] [color={rgb, 255:red, 0; green, 0; blue, 0 }  ][line width=0.75]    (10.93,-3.29) .. controls (6.95,-1.4) and (3.31,-0.3) .. (0,0) .. controls (3.31,0.3) and (6.95,1.4) .. (10.93,3.29)   ;

\draw    (767.68,75.5) -- (767.68,122)(764.68,75.5) -- (764.68,122) ;

\draw  [fill={rgb, 255:red, 255; green, 255; blue, 255 }  ,fill opacity=1 ] (751.75,62) -- (780.75,62) -- (780.75,85) -- (751.75,85) -- cycle ;

\draw    (59.25,72.5) -- (248.33,72.5) ;

\draw    (29.68,129.5) -- (245.33,129.5) ;

\draw    (103.75,65.25) -- (103.75,129.5) ;

\draw   (96.5,72.5) .. controls (96.5,68.5) and (99.75,65.25) .. (103.75,65.25) .. controls (107.75,65.25) and (111,68.5) .. (111,72.5) .. controls (111,76.5) and (107.75,79.75) .. (103.75,79.75) .. controls (99.75,79.75) and (96.5,76.5) .. (96.5,72.5) -- cycle ;

\draw  [fill={rgb, 255:red, 0; green, 0; blue, 0 }  ,fill opacity=1 ] (100.07,129.5) .. controls (100.07,127.47) and (101.72,125.82) .. (103.75,125.82) .. controls (105.78,125.82) and (107.43,127.47) .. (107.43,129.5) .. controls (107.43,131.53) and (105.78,133.18) .. (103.75,133.18) .. controls (101.72,133.18) and (100.07,131.53) .. (100.07,129.5) -- cycle ;

\draw  [fill={rgb, 255:red, 255; green, 255; blue, 255 }  ,fill opacity=1 ] (130.75,116.5) -- (159.75,116.5) -- (159.75,139.5) -- (130.75,139.5) -- cycle ;

\draw   (246,81.32) .. controls (246.09,72.29) and (252.85,65) .. (261.18,65) .. controls (269.44,65) and (276.16,72.16) .. (276.36,81.08) -- cycle ;

\draw    (248.5,91) -- (281.59,57.91) ;
\draw [shift={(283,56.5)}, rotate = 135] [color={rgb, 255:red, 0; green, 0; blue, 0 }  ][line width=0.75]    (10.93,-3.29) .. controls (6.95,-1.4) and (3.31,-0.3) .. (0,0) .. controls (3.31,0.3) and (6.95,1.4) .. (10.93,3.29)   ;

\draw  [fill={rgb, 255:red, 255; green, 255; blue, 255 }  ,fill opacity=1 ] (193.75,117) -- (222.75,117) -- (222.75,140) -- (193.75,140) -- cycle ;

\draw    (172.75,65.25) -- (172.75,129.5) ;

\draw   (165.5,72.5) .. controls (165.5,68.5) and (168.75,65.25) .. (172.75,65.25) .. controls (176.75,65.25) and (180,68.5) .. (180,72.5) .. controls (180,76.5) and (176.75,79.75) .. (172.75,79.75) .. controls (168.75,79.75) and (165.5,76.5) .. (165.5,72.5) -- cycle ;

\draw  [fill={rgb, 255:red, 0; green, 0; blue, 0 }  ,fill opacity=1 ] (169.07,129.5) .. controls (169.07,127.47) and (170.72,125.82) .. (172.75,125.82) .. controls (174.78,125.82) and (176.43,127.47) .. (176.43,129.5) .. controls (176.43,131.53) and (174.78,133.18) .. (172.75,133.18) .. controls (170.72,133.18) and (169.07,131.53) .. (169.07,129.5) -- cycle ;

\draw (422,120.9) node [anchor=north west][inner sep=0.75pt]    {$H$};

\draw (469.75,64.4) node [anchor=north west][inner sep=0.75pt]    {$Z$};

\draw (313.5,65.15) node [anchor=north west][inner sep=0.75pt]    {$| 0\rangle $};

\draw (284.5,119.9) node [anchor=north west][inner sep=0.75pt]    {$| \psi \rangle $};

\draw (520.5,65.15) node [anchor=north west][inner sep=0.75pt]    {$| \psi \rangle $};

\draw (477,25) node [anchor=north west][inner sep=0.75pt]   [align=left] {Teleportation-based LDU};

\draw (661,67.4) node [anchor=north west][inner sep=0.75pt]    {$H$};

\draw (603.5,66.65) node [anchor=north west][inner sep=0.75pt]    {$| 0\rangle $};

\draw (574.5,121.4) node [anchor=north west][inner sep=0.75pt]    {$| \psi \rangle $};

\draw (810.5,66.65) node [anchor=north west][inner sep=0.75pt]    {$| \psi \rangle $};

\draw (759.75,65.9) node [anchor=north west][inner sep=0.75pt]    {$X$};

\draw (91.75,25) node [anchor=north west][inner sep=0.75pt]   [align=left] {Standard LDU};

\draw (201.75,120.9) node [anchor=north west][inner sep=0.75pt]    {$X$};

\draw (137,120.4) node [anchor=north west][inner sep=0.75pt]    {$X$};

\draw (249.5,120.65) node [anchor=north west][inner sep=0.75pt]    {$| \psi \rangle $};

\draw (-1.5,119.4) node [anchor=north west][inner sep=0.75pt]    {$| \psi \rangle $};

\draw (27.5,64.65) node [anchor=north west][inner sep=0.75pt]    {$| 0\rangle $};

\end{tikzpicture}
}
    \caption{The circuit representation for the standard loss detection unit and the teleportation-based loss detection unit \cite{perrin2025quantumerrorcorrectionresilient}.}
    \label{fig:enter-label}
\end{figure*}
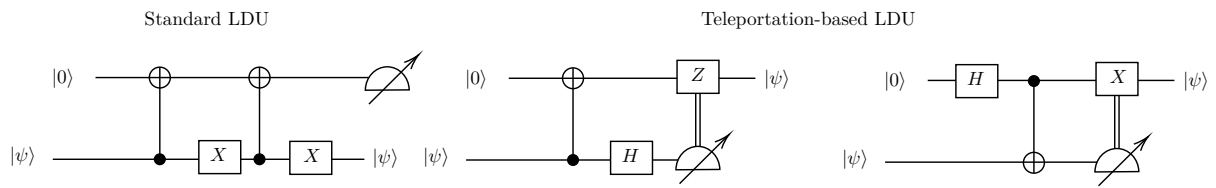

\paragraph{Teleportation-based  LDUs.} A teleportation-based LDU requires only one CNOT gate. 
If no loss occurs, the older qubit is swapped with a fresh ancilla. The measurement outcome is completely random and possible Pauli errors introduced by the measurement will be corrected later.  There are two equivalent circuit implementations for a TLDU, depending on whether the older qubit acts as the control or the target of the CNOT gate. For simplicity, we assume that the older qubit is always the target.

We can also trivially construct the teleportation-based LDU for qudits.

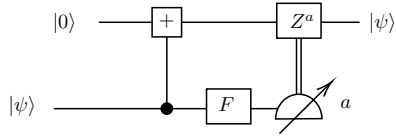
\begin{figure}[htbp]
    \centering
\tikzset{every picture/.style={line width=0.75pt}} 
\resizebox{0.33\textwidth}{!}{%
\begin{tikzpicture}[x=0.75pt,y=0.75pt,yscale=-1,xscale=1]

\draw    (366.25,120) -- (537,120) ;

\draw    (336.75,177) -- (483.5,177) ;

\draw    (410.75,112.75) -- (410.75,177) ;

\draw  [fill={rgb, 255:red, 0; green, 0; blue, 0 }  ,fill opacity=1 ] (407.07,177) .. controls (407.07,174.97) and (408.72,173.32) .. (410.75,173.32) .. controls (412.78,173.32) and (414.43,174.97) .. (414.43,177) .. controls (414.43,179.03) and (412.78,180.68) .. (410.75,180.68) .. controls (408.72,180.68) and (407.07,179.03) .. (407.07,177) -- cycle ;

\draw  [fill={rgb, 255:red, 255; green, 255; blue, 255 }  ,fill opacity=1 ] (436.75,164) -- (465.75,164) -- (465.75,187) -- (436.75,187) -- cycle ;

\draw   (482,183.82) .. controls (482.09,174.79) and (488.85,167.5) .. (497.18,167.5) .. controls (505.44,167.5) and (512.16,174.66) .. (512.36,183.58) -- cycle ;

\draw    (484.5,193.5) -- (517.59,160.41) ;
\draw [shift={(519,159)}, rotate = 135] [color={rgb, 255:red, 0; green, 0; blue, 0 }  ][line width=0.75]    (10.93,-3.29) .. controls (6.95,-1.4) and (3.31,-0.3) .. (0,0) .. controls (3.31,0.3) and (6.95,1.4) .. (10.93,3.29)   ;

\draw    (498.68,121) -- (498.68,167.5)(495.68,121) -- (495.68,167.5) ;

\draw  [fill={rgb, 255:red, 255; green, 255; blue, 255 }  ,fill opacity=1 ] (482.75,107.5) -- (511.75,107.5) -- (511.75,130.5) -- (482.75,130.5) -- cycle ;

\draw  [fill={rgb, 255:red, 255; green, 255; blue, 255 }  ,fill opacity=1 ] (401.1,110.35) -- (420.4,110.35) -- (420.4,129.65) -- (401.1,129.65) -- cycle ;

\draw (334.5,112.15) node [anchor=north west][inner sep=0.75pt]    {$| 0\rangle $};

\draw (305.5,166.9) node [anchor=north west][inner sep=0.75pt]    {$| \psi \rangle $};

\draw (541.5,112.15) node [anchor=north west][inner sep=0.75pt]    {$| \psi \rangle $};

\draw (404.1,113.4) node [anchor=north west][inner sep=0.75pt]    {$+$};

\draw (523,169.4) node [anchor=north west][inner sep=0.75pt]    {$a$};

\draw (489.75,113.4) node [anchor=north west][inner sep=0.75pt]    {$Z^{a}$};

\draw (443,167.9) node [anchor=north west][inner sep=0.75pt]    {$F$};

\end{tikzpicture}

}
    \caption{The circuit representation for the qudit teleportation-based loss detection unit. 
    }
\end{figure}

In this work, we mainly focus on loss detection by applying teleportation-based LDUs (TLDUs) to data qudits undergoing controlled operations in a stabilizer measurement circuit. Its advantages over standard LDUs have been demonstrated in \cite{perrin2025quantumerrorcorrectionresilient}.

If at least one qudit is lost during a TLDU, the teleportation procedure fails, and one of the following three cases occurs, as shown in Fig. \ref{fig:3tldu_loss}:
\begin{enumerate}
    \item The older (data) qudit is lost. In this case, the loss is detected, and the older qudit is replaced by a fresh ancilla.
    \item The fresh qudit used for teleportation is lost and the loss goes undetected. The older qudit is measured and replaced by the lost fresh qudit. The loss will be detected in the next round of loss detection. 
    \item Both the older qudit and the fresh qudit are lost. The loss of the older qudit is detected, and it is replaced by another lost qudit. Further loss detection and erasure error correction are required. 
\end{enumerate}

It is straightforward to show that these cases yield the same ``unaffected'' local subgroup, which remains untouched and therefore need not be reconstructed in adaptive EEC. The ``affected'' part, whether punctured by incorrectly measuring the older data qudits or out of control due to qudit losses, is the only part that needs to be restored.

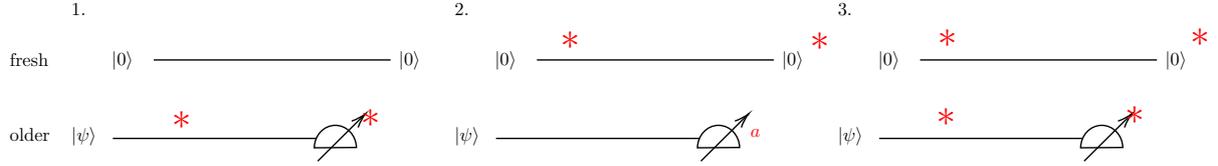
\begin{figure*}[htbp]
    \centering
\tikzset{every picture/.style={line width=0.75pt}} 
\resizebox{\textwidth}{!}{%

\begin{tikzpicture}[x=0.75pt,y=0.75pt,yscale=-1,xscale=1]

\draw    (107.25,59) -- (278,59) ;

\draw    (77.75,116) -- (224.5,116) ;

\draw   (223,122.82) .. controls (223.09,113.79) and (229.85,106.5) .. (238.18,106.5) .. controls (246.44,106.5) and (253.16,113.66) .. (253.36,122.58) -- cycle ;

\draw    (225.5,132.5) -- (258.59,99.41) ;
\draw [shift={(260,98)}, rotate = 135] [color={rgb, 255:red, 0; green, 0; blue, 0 }  ][line width=0.75]    (10.93,-3.29) .. controls (6.95,-1.4) and (3.31,-0.3) .. (0,0) .. controls (3.31,0.3) and (6.95,1.4) .. (10.93,3.29)   ;

\draw    (383.5,59) -- (554.25,59) ;

\draw    (354,116) -- (500.75,116) ;

\draw   (499.25,122.82) .. controls (499.34,113.79) and (506.1,106.5) .. (514.43,106.5) .. controls (522.69,106.5) and (529.41,113.66) .. (529.61,122.58) -- cycle ;

\draw    (501.75,132.5) -- (534.84,99.41) ;
\draw [shift={(536.25,98)}, rotate = 135] [color={rgb, 255:red, 0; green, 0; blue, 0 }  ][line width=0.75]    (10.93,-3.29) .. controls (6.95,-1.4) and (3.31,-0.3) .. (0,0) .. controls (3.31,0.3) and (6.95,1.4) .. (10.93,3.29)   ;

\draw    (659.75,59) -- (830.5,59) ;

\draw    (630.25,116) -- (777,116) ;

\draw   (775.5,122.82) .. controls (775.59,113.79) and (782.35,106.5) .. (790.68,106.5) .. controls (798.94,106.5) and (805.66,113.66) .. (805.86,122.58) -- cycle ;

\draw    (778,132.5) -- (811.09,99.41) ;
\draw [shift={(812.5,98)}, rotate = 135] [color={rgb, 255:red, 0; green, 0; blue, 0 }  ][line width=0.75]    (10.93,-3.29) .. controls (6.95,-1.4) and (3.31,-0.3) .. (0,0) .. controls (3.31,0.3) and (6.95,1.4) .. (10.93,3.29)   ;

\draw (46.5,16.5) node [anchor=north west][inner sep=0.75pt]   [align=left] {1.};

\draw (322.75,16.5) node [anchor=north west][inner sep=0.75pt]   [align=left] {2.};

\draw (599,16.5) node [anchor=north west][inner sep=0.75pt]   [align=left] {3.};

\draw (671,35.4) node [anchor=north west][inner sep=0.75pt]  [font=\huge,color={rgb, 255:red, 255; green, 0; blue, 0 }  ,opacity=1 ]  {$*$};

\draw (670,92.4) node [anchor=north west][inner sep=0.75pt]  [font=\huge,color={rgb, 255:red, 255; green, 0; blue, 0 }  ,opacity=1 ]  {$*$};

\draw (628,51.15) node [anchor=north west][inner sep=0.75pt]    {$| 0\rangle $};

\draw (599,105.9) node [anchor=north west][inner sep=0.75pt]    {$| \psi \rangle $};

\draw (835,51.15) node [anchor=north west][inner sep=0.75pt]    {$| 0\rangle $};

\draw (806,91.4) node [anchor=north west][inner sep=0.75pt]  [font=\huge,color={rgb, 255:red, 255; green, 0; blue, 0 }  ,opacity=1 ]  {$*$};

\draw (853,34.4) node [anchor=north west][inner sep=0.75pt]  [font=\huge,color={rgb, 255:red, 255; green, 0; blue, 0 }  ,opacity=1 ]  {$*$};

\draw (399,36.4) node [anchor=north west][inner sep=0.75pt]  [font=\huge,color={rgb, 255:red, 255; green, 0; blue, 0 }  ,opacity=1 ]  {$*$};

\draw (351.75,51.15) node [anchor=north west][inner sep=0.75pt]    {$| 0\rangle $};

\draw (322.75,105.9) node [anchor=north west][inner sep=0.75pt]    {$| \psi \rangle $};

\draw (558.75,51.15) node [anchor=north west][inner sep=0.75pt]    {$| 0\rangle $};

\draw (536,107.4) node [anchor=north west][inner sep=0.75pt]  [font=\normalsize,color={rgb, 255:red, 255; green, 0; blue, 0 }  ,opacity=1 ]  {$a$};

\draw (579,37.4) node [anchor=north west][inner sep=0.75pt]  [font=\huge,color={rgb, 255:red, 255; green, 0; blue, 0 }  ,opacity=1 ]  {$*$};

\draw (119,94.4) node [anchor=north west][inner sep=0.75pt]  [font=\huge,color={rgb, 255:red, 255; green, 0; blue, 0 }  ,opacity=1 ]  {$*$};

\draw (255,93.4) node [anchor=north west][inner sep=0.75pt]  [font=\huge,color={rgb, 255:red, 255; green, 0; blue, 0 }  ,opacity=1 ]  {$*$};

\draw (75.5,51.15) node [anchor=north west][inner sep=0.75pt]    {$| 0\rangle $};

\draw (46.5,105.9) node [anchor=north west][inner sep=0.75pt]    {$| \psi \rangle $};

\draw (282.5,51.15) node [anchor=north west][inner sep=0.75pt]    {$| 0 \rangle $};

\draw (2,107) node [anchor=north west][inner sep=0.75pt]   [align=left] {older};

\draw (2,53) node [anchor=north west][inner sep=0.75pt]   [align=left] {fresh};

\end{tikzpicture}

}
    \caption{Three possible loss scenarios in a teleportation-based LDU (TLDU). The red asterisks indicate qudit losses. All subsequent operations acting on a lost qudit, including single- and two-qudit gates, have no effect until the qudit is replaced by a fresh ancilla. A measurement performed on a lost qudit returns a loss symbol rather than a classical bit, and has no effect on the lost qudit. In Case $2$, by incorrectly measuring the older data qudit, a random and useless outcome $a$ is extracted. }
    \label{fig:3tldu_loss}
\end{figure*}  
 We expect similar results to hold for modified circuits that naturally detect loss without additional overhead, and we provide brief comments in the Discussion section.

\section{Adaptive erasure error ``correction''\label{sec:adaptive_eec}}
Detected losses become erasures. A common first step in erasure error correction is to replace erased qudits with fresh ancillas, or equivalently, to reinitialize them to a state unentangled with the rest of the system. A sequence of generator measurements is performed afterwards, simultaneously projecting the state back to the stabilizer subspace on the new set of registers and rewriting corresponding syndrome information. The erasure is then converted into a located Pauli error, which has not been corrected at this stage. However, we can at least claim that we have ``corrected'' the qudit losses and can continue working within the same stabilizer subspace. We also do not want to cause any confusion by reusing the term ``erasure conversion''. Here comes the name of this stage: erasure error correction. 

At this moment, we may restrict our attention to projecting the state into a stabilizer subspace. For a fixed generating set, the resulting state may lie in any syndrome subspace, and the random phase factors will be handled later. Instead of measuring the full set of stabilizer generators, one might expect that a much smaller number of stabilizer measurements is required when the number of erased qudits is small, or if the code is good enough \cite{eczoo_good_qldpc}. 
For qubits and prime-dimensional qudits, we show how to reduce the problem of finding the minimal set of stabilizers for adaptive EEC to identifying a local subgroup ``unaffected'' by erasures, and analyze specific syndrome extraction gadgets. In addition, we construct a canonical generating set for stabilizer groups of arbitrary composite-dimensional qudits under bipartition.

\subsection{Reduction to a subgroup-dimension problem for prime qudits\label{sec:adaptive_eec_subgroup}}
We begin by presenting some generic results for qubits and qudits of prime dimension.
\begin{theorem}\label{thm:min_msmt_general}
The minimum number of measurements required to convert the tensor product of a state stabilized by $\mathcal{S}$ on Hilbert space $\cH_A\otimes\cH_E$ and an ancillary state on $\cH_B$, into the tensor product of a state that has the same stabilizer generators as $\mathcal{S}'$ up to phases 
on $\cH_A\otimes \cH_B$
and a state on $\cH_E$ is 
 
        \[
    \dim \mathcal{S}' - \dim \big((\mathcal{S}^A/K) \cap (\mathcal{S}'^{A}/K)\big)-\dim \mathcal{S}'^B.
    \] 
\end{theorem}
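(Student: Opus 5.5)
We prove matching upper and lower bounds by tracking the stabilizer tableau on $\cH_A\otimes\cH_E\otimes\cH_B$ through a sequence of Pauli measurements, working over $\mathbb{F}_q$ with $q$ prime so that $\mathcal{S}/K$ and its subgroups are vector spaces and dimensions add.

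\textbf{Initial and target stabilizers.} The initial stabilizer group is $\mathcal{T}_0=\mathcal{S}\otimes\mathcal{B}$, where $\mathcal{B}$ stabilizes the fresh ancillary state on $\cH_B$. The target is any group of the form $\mathcal{S}'\otimes\mathcal{R}_E$, up to phases, with $\mathcal{R}_E$ supported on $E$.

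Only elements of $\mathcal{S}'$ need to be created by measurement. Restricted to $A\cup B$, the group $\mathcal{T}_0$ contains exactly $\mathcal{S}^A\otimes\mathcal{B}$. Two parts of $\mathcal{S}'$ are therefore already present up to phases:
\begin{itemize}
\item $(\mathcal{S}^A/K)\cap(\mathcal{S}'^A/K)$, because $\mathcal{S}^A$ is untouched by the loss on $E$ and survives every later measurement that commutes with it;
\item $\mathcal{S}'^B$, taken to be contained in the ancilla stabilizer $\mathcal{B}$. I would state this as an assumption on the fresh ancilla preparation, which is the situation intended.
\end{itemize}
These two subspaces intersect trivially, since their supports are $A$ and $B$. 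Their span $\mathcal{U}$ therefore has dimension $\dim(\mathcal{S}^A/K\cap\mathcal{S}'^A/K)+\dim\mathcal{S}'^B$.

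\textbf{Upper bound.} Pick a complement $\mathcal{W}$ of $\mathcal{U}$ in $\mathcal{S}'/K$ and measure a basis of $\mathcal{W}$, one generator at a time. Each measured operator commutes with $\mathcal{U}$, because both lie in the abelian group $\mathcal{S}'$. By the standard stabilizer-update rule, a measured Pauli either already lies in the current group up to phase or replaces a single anticommuting generator. Hence no element of $\mathcal{U}$ is ever disturbed. After all $\dim\mathcal{W}$ measurements, the group restricted to $A\cup B$ contains all of $\mathcal{S}'$ up to phases.

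For the tensor-product structure, the stabilizer on $A\cup B$ is then maximal with respect to the full state on $A\cup B$. This forces the global stabilizer to split as $\mathcal{S}'\otimes\mathcal{R}_E$, by the canonical form of Definition~\ref{def:canonical} applied to the bipartition $(A\cup B)\,|\,E$. The maximality claim needs a little more justification:
\begin{itemize}
\item If $\mathcal{S}'$ defines a pure state on $A\cup B$, the $AB|E$ cross subgroup must be empty.
\item Otherwise, I would take "same stabilizer generators as $\mathcal{S}'$" to mean that the cross pairs $g_k\otimes\overline g_k$ vanish. This holds because every $E$-support generator of $\mathcal{T}_0$ that anticommutes with a measured operator is removed, and what remains commutes with everything on $A\cup B$.
\end{itemize}
The upper bound is therefore $\dim\mathcal{W}=\dim\mathcal{S}'-\dim\mathcal{U}$.

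\textbf{Lower bound.} Each Pauli measurement enlarges the $A\cup B$-local subgroup of the stabilizer, modulo phases, by at most one dimension. To see this:
\begin{itemize}
\item Measuring $P$ leaves the new group $\langle\mathcal{T}\cap C(P),P\rangle$.
\item Its local part on $A\cup B$ is contained in the span of $P$ together with the local part of $\mathcal{T}\cap C(P)$.
\item The local part of $\mathcal{T}\cap C(P)$ is a subspace of the old local part, since local elements of $\mathcal{T}\cap C(P)$ were local in $\mathcal{T}$.
\end{itemize}
So the dimension rises by at most one per measurement.

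It remains to check that the part of $\mathcal{S}'$ not explainable without measurement is exactly of dimension $\dim\mathcal{S}'-\dim\mathcal{U}$. Initially, $(\mathcal{S}'/K)\cap(\text{local part of }\mathcal{T}_0)$ equals $(\mathcal{S}'/K)\cap((\mathcal{S}^A\otimes\mathcal{B})/K)$. I need this to equal $\mathcal{U}$, which requires
\[
\mathcal{S}'\cap(\mathcal{S}^A\otimes\mathcal{B})=(\mathcal{S}'^A\cap\mathcal{S}^A)\cdot\mathcal{S}'^B.
\]
I expect this splitting to be the main obstacle. It fails in general: an element $a\otimes b$ of $\mathcal{S}'$ with $a\in\mathcal{S}^A$ and $b\in\mathcal{B}$ need not have both factors in $\mathcal{S}'$. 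I would try to resolve it by the following normalization:
\begin{itemize}
\item Write $\mathcal{S}'$ in canonical form for the $A|B$ bipartition.
\item Argue that the ancilla on $B$ is a fresh product (pure) state, so any $\mathcal{S}'$ element of the form $a\otimes b$ with $b\in\mathcal{B}$ is a product with $b$ itself in $\mathcal{S}'^B$, after multiplying by local $B$-stabilizers of $\mathcal{S}'$.
\item If that is not automatic, I would add it as the interpretation of $\mathcal{S}'^B$, namely the part of $\mathcal{S}'$ already fixed by the ancilla preparation.
\end{itemize}
With this identification the lower bound matches the upper bound, which gives the stated formula.
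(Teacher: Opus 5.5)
Your upper bound is essentially the paper's argument. The paper's proof does the same tableau bookkeeping: it marks only $\mathcal{S}_1=\mathcal{S}^A\cap K\mathcal{S}'^A$ and $\mathcal{S}'^B\subseteq\mathcal{S}_{anc}$ as needing no conversion (the ancilla ``prepared optimally'', i.e.\ your assumption $\mathcal{S}'^B\subseteq\mathcal{B}$), and then reads off the count. Your lower bound goes beyond the paper, which gives none.

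However, the step you flagged is a genuine gap, and your second normalization bullet is false: a fresh pure product ancilla does not force $b\in\mathcal{S}'^B$. Take single qubits $a\in A$, $e\in E$, $b\in B$, with $\mathcal{S}=\langle Z_a,Z_e\rangle$, ancilla $|0\rangle_b$ (so $\mathcal{B}=\langle Z_b\rangle$), and $\mathcal{S}'=\langle Z_aZ_b,X_aX_b\rangle$. Then $\mathcal{S}'^A$ and $\mathcal{S}'^B$ are trivial, $\mathcal{S}'^B\subseteq\mathcal{B}$ holds, and the stated formula gives $2$. Yet $Z_aZ_b$ already stabilizes $|0\rangle_a|0\rangle_e|0\rangle_b$, and a single measurement of $X_aX_b$ yields $|0\rangle_e\otimes(|00\rangle\pm|11\rangle)_{ab}/\sqrt{2}$, so the minimum is $1$.

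So the identity $\mathcal{S}'\cap(\mathcal{S}^A\otimes\mathcal{B})=(\mathcal{S}'^A\cap\mathcal{S}^A)\cdot\mathcal{S}'^B$ cannot be proved, and the statement as written needs an extra hypothesis. Your third bullet (reinterpreting $\mathcal{S}'^B$) changes the statement rather than proving it. The paper's proof has the same hole: it never rules out that products of rows of $\overline{\mathcal{S}_1}$ and $\overline{{\mathcal{S}'}^B}$ (here $Z_a\cdot Z_b$) lie in $\mathcal{S}'$.

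What your two arguments do establish is the following, once you replace $\mathcal{U}$ by $\mathcal{U}':=(\mathcal{S}'/K)\cap\big((\mathcal{S}^A\otimes\mathcal{B})/K\big)\supseteq\mathcal{U}$. The minimum is exactly $\dim\mathcal{S}'-\dim\mathcal{U}'$ for $\mathcal{S}'$ defining a unique state: measure a complement of $\mathcal{U}'$, and your upper-bound reasoning applies verbatim. This matches the stated formula iff $\mathcal{U}'=\mathcal{U}$.

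A convenient sufficient hypothesis is $\mathcal{S}^A\subseteq K\mathcal{S}'^A$, as when $\mathcal{S}^A=\mathcal{S}_{\mathcal{C}}^A$ in the erasure setting. Indeed, if $\omega^c\,a\otimes b\in\mathcal{S}'$ with $a\in\mathcal{S}^A$ and $b\in\mathcal{B}$, then $a\otimes I_B\in K\mathcal{S}'$. Hence $I_A\otimes b\in K\mathcal{S}'$, so $a\in K\mathcal{S}'^A$ and $b\in K\mathcal{S}'^B$. Another sufficient hypothesis is $\mathcal{B}=\mathcal{S}'^B$, which is the paper's special case $|B|=\dim\mathcal{S}'^B$.

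There are also two smaller repairs.

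\begin{itemize}
\item Your claim that the new local part lies in the span of $P$ and the old local part fails when $P$ touches $E$. If the current group is $\langle Z_aZ_e,X_aX_e\rangle$, measuring $Z_e$ creates the local element $Z_a$. The bound survives, because the new group $\mathcal{T}'$ contains $\mathcal{T}\cap C(P)\subseteq\mathcal{T}$ with codimension at most one. This gives $\dim(\mathcal{T}'\cap\mathcal{S}')\le\dim(\mathcal{T}\cap\mathcal{S}')+1$ modulo phases, which is the form you actually need. A dimension bound on the local part alone does not control its intersection with $\mathcal{S}'$, since successive local parts are not nested.
\item Your non-pure sub-bullet is wrong as stated: surviving generators commute only with the measured operators, so $A$--$E$ cross terms can remain. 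That case is outside the theorem, which the paper applies only when $\mathcal{S}'$ determines a unique state, so simply drop it.
\end{itemize}
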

\begin{proof} See Appendix~\ref{proof:min_msmt_general}. 
\end{proof}

Theorem~\ref{thm:min_msmt_general} applies to stabilizer states that are uniquely determined by the stabilizer group. We may allow some freedom and extend it to stabilizer codes. Cleaning lemma implies that to preserve the logical information, an additional assumption is needed: for any logical operator, there must exist a physical representative that does not require projection. We now address the problem in the following scenario:

Consider a state that was initially stabilized by $\mathcal{S}_{\cC}$. After undergoing quantum operations and erasures, the state now lies in the Hilbert space $\cH_A\otimes \cH_E$, where $A$ denotes the set of unerased data qudits and $E$ denotes the set of erased data qudits and ancillas. We say the erasure set $E$ is \emph{correctable}\footnote{Note that under the mixed error model, the term \emph{correctable} is more appropriately interpreted as \emph{not fully uncorrectable} in the presence of unknown Pauli errors. However, for consistency with existing terminology, we retain this term.} if the logical information can be cleaned off of $E$; that is, the logical operators of the code admit physical representatives supported only on $A$ \cite{bravyi2009no}. For an $[[n,k,d]]_q$ stabilizer code, the condition $\abs{A}\geq n-d+1$ is sufficient for correctability. In the following corollary, we restrict to this sufficient condition.

Let $B$ denote the set of fresh qudits that replace the erased data qudits. To reconstruct the code state on $\cH_A\otimes \cH_B$, the minimum number of measurements required to convert the erasures into a located Pauli error  is
\begin{corollary}\label{cor:min_msmt_code}
    Given an $[[n,k,d]]$ stabilizer code $\cC$, consider a state on 
    $\cH_A\otimes \cH_E$ stabilized by $\mathcal{S}$ and encoding unknown logical information. Assume $\abs{A}\geq n-d+1$. 

    Then the minimum number of measurements required to convert it to a state encoding the same logical information using $\cC$ on $\cH_A\otimes \cH_B$, up to some phases on the generators of $\mathcal{S}_{\mathcal{C}}$,
     is
        \[
    \dim \mathcal{S}_\cC - \dim \big((\mathcal{S}_\cC^A /K) \cap ( \mathcal{S}^A/K)\big) - \dim \mathcal{S}_\cC^{B}. 
    \]
    For nondegenerate codes, the last term $
    \dim \mathcal{S}_\cC^{B} =0 $.
\end{corollary}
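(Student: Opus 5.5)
The plan is to reduce the corollary to Theorem~\ref{thm:min_msmt_general}. The obstacle is that the theorem is stated for stabilizer \emph{states}, while here the code encodes unknown logical information. I will purify the logical information with a reference system. Introduce a reference $R$ of $k$ qudits maximally entangled with the logical qudits. The global state on $\cH_A\otimes\cH_E\otimes\cH_R$ is then a stabilizer state. Its stabilizer group is generated by $\mathcal{S}$ together with the products $\overline{X}_j\otimes X_j^{R}$ and $\overline{Z}_j\otimes Z_j^{R}$ (with the appropriate conjugate on $R$ for qudits). Similarly, the target state on $\cH_A\otimes\cH_B\otimes\cH_R$ is a stabilizer state with group $\mathcal{S}'$, generated by $\mathcal{S}_\cC$ and the analogous logical-reference products. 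Preserving the same logical information, up to phases on the generators of $\mathcal{S}_\cC$, is exactly the condition that the output has stabilizer group $\mathcal{S}'$ up to phases on the $\mathcal{S}_\cC$ part.

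The key step, and the place where $\abs{A}\geq n-d+1$ enters, is to choose the logical representatives so that they are supported on $A\cup R$. Since $\abs{E\cap[n]}\le d-1$, every logical operator can be cleaned off the erased data qudits by the cleaning lemma (Lemma~\ref{lemma:cleaning}, which extends to qudits as noted). This gives representatives supported on $A$, and hence logical-reference generators supported on $A\cup R$. They act identically before and after the conversion, and no operation on $A\cup R$ needs to disturb them.

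Next I apply Theorem~\ref{thm:min_msmt_general}, taking $A\cup R$ in the role of $A$. The count is $\dim\mathcal{S}' - \dim\big((\mathcal{S}_{\rm tot}^{AR}/K)\cap(\mathcal{S}'^{AR}/K)\big)-\dim\mathcal{S}'^B$.
\begin{itemize}
\item \emph{First term.} $\dim\mathcal{S}' = \dim\mathcal{S}_\cC + 2k$.
\item \emph{Last term.} $\mathcal{S}'^B=\mathcal{S}_\cC^B$, because every logical-reference generator acts nontrivially on $R$.
\item \emph{Intersection.} I must show it decomposes as $\big((\mathcal{S}_\cC^A/K)\cap(\mathcal{S}^A/K)\big)\oplus\langle\text{logical-reference pairs}\rangle$, contributing exactly $2k$ extra dimensions, so that the $2k$ cancels.
\end{itemize}
Showing this decomposition is the main obstacle. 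The inclusion $\supseteq$ is clear from the cleaned representatives. For $\subseteq$, take an element of the intersection and multiply it by logical-reference products to make its $R$-part trivial; this is possible because the $R$-parts of these generators span the full Pauli group on $R$. What remains lies in both groups restricted to $A$. In the target group this forces it into $\mathcal{S}_\cC^A$. In the source group one must check that it lies in $\mathcal{S}^A$ rather than $\mathcal{S}^A$ times a logical operator, which follows because an element with trivial $R$-part cannot involve the logical-reference generators nontrivially. The same argument shows that no logical operator is counted twice, which gives the direct-sum structure and hence the cancellation.

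For the final claim, in a nondegenerate code every nontrivial stabilizer has weight at least $d$. Since $\abs{B}=\abs{E\cap[n]}\le d-1$, no nontrivial element of $\mathcal{S}_\cC$ is supported on $B$, so $\dim\mathcal{S}_\cC^B=0$.
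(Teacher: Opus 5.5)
Your proposal is correct and follows the paper's route. You complete the state to a stabilizer state by adjoining generators that carry the logical information, use the cleaning lemma to put them on $A$, and apply Theorem~\ref{thm:min_msmt_general}; the extra dimensions then cancel between the first term and the intersection, and $\mathcal{S}'^B=\mathcal{S}_{\mathcal{C}}^B$. The one difference is that the paper adjoins a logical subgroup $\mathcal{S}_L$ (resp.\ $\mathcal{S}_{\mathcal{C}L}$) supported on $A$ directly instead of purifying with a reference system, and it asserts without proof the splitting of the intersection that you verify via the $R$-part bookkeeping.
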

\begin{proof}
See Appendix~\ref{proof:min_msmt_code}.
\end{proof}

In specific erasure error correction scenarios, 
the last term may be removed if there is no entanglement among the fresh ancillas, for example, when lost data qubits are replaced by ancillas supplied by TLDUs, or when maintaining a larger pre-prepared ancilla factory is impractical.

We defer the study of minimizing the total weight of the measured generators, as well as the characterization of larger correctable regions beyond the $d-1$ sufficient condition\cite{baspin2022connectivity, dai2025locality} to future work, since they are independent of the main focus here.

\subsection{Canonical form of bipartite stabilizer states for qudits of composite dimension}

For qudits of composite dimension, things are much more complicated. 
For the sake of completeness, we first provide a constructive proof for the canonical form of bipartite stabilizer states on qudits of arbitrary composite dimension.

As suggested in \cite{sarkar2024qudit}, one needs to work with the size of the minimal generating set, since the symplectic representation is treated as modules and the rank (or dimension) of its submodules is not well defined. 
 Following Lemma~\ref{lem:qudit6.11}, we construct the local and nonlocal generating sets in a manner similar to Definition~\ref{def:canonical}, that is, the nonlocal generating sets of a composite qudit bipartition do consist of pairs of generators whose local projections non-commute, but commute with all other generators. While this may appear straightforward, we have not yet noticed such a construction in the existing literature for arbitrary composite-dimensional qudits. 

\begin{definition}
    Let $p_A$ be the projection of the stabilizer group $\mathcal{S}\subseteq \overline{\mathcal{P}}_n$ onto a subset $A\subseteq[n]$, with the image denoted as $\mathcal{S}|_A = p_A(\mathcal{S})$. 
    We call $\mathcal{S}|_A$ the local restriction of $\mathcal{S}$ to $A$.
\end{definition}

By definition, $\ker (p_A) = \mathcal{S}^B $. Therefore, by the first isomorphism theorem, there exists an isomorphism
    \[
    \varphi_A: \mathcal{S}|_A\xrightarrow{\sim} \mathcal{S}/\mathcal{S}^B.
    \]

We borrow some notions from Lemma~\ref{lem:qudit6.11} and provide the construction in the following lemma.
\begin{lemma}
    Let $S=\{s_0, s_1, \dots, s_{k-1}\}$ be a generating set of a stabilizer group $\mathcal{S} = \langle S\rangle \subseteq \overline{\mathcal{P}}_n$. 
    Given a partition of qudits $\{A,B\}$, let $S|_A = \{s_0|_A, s_1|_A, \dots, s_{k-1}|_A \}$ and $S|_B = \{s_0|_B, s_1|_B, \dots, s_{k-1}|_B \}$ be the local restrictions of $S$ that generate $\mathcal{S}|_A$ and $\mathcal{S}|_B$ respectively. Let $C^A_{ij}=\llbracket s_i|_A,s_j|_A\rrbracket_d, C^B_{ij}=\llbracket s_i|_B,s_j|_B\rrbracket_d$ define matrices $C^A, C^B\in\mathbb{Z}_d^{k\times k}$.
    
    If one of the following statements is true:
    \begin{enumerate}
        \item $\mathcal{S}$ defines a stabilizer state, or
        \item $\mathcal{S}$ defines a stabilizer code, and $A$ is correctable,
    \end{enumerate}
    then there exists generating sets of the local subgroups $\mathcal{S}^A, \mathcal{S}^B$ and the nonlocal subgroup $\mathcal{S}^{AB}$ in the canonical form
\[
    S^A = \{ a_i \otimes I_B\} , S^B = \{ I_A \otimes b_j \}, 
    S^{AB} = \{ g_k \otimes \overline{g}_k, h_k \otimes \overline{h}_k\}_{k=1}^{r},
\]
  where the non-commuting pairs $(g_k \otimes \overline{g}_k, h_k \otimes \overline{h}_k)$ are obtained from the Gram-Schmidt generating set of $\mathcal{S}|_A$ given by Lemma~\ref{lem:qudit6.11}, and $r = \Theta(M_{C^A})/2$.
\end{lemma}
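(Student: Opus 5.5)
Since every element of $\mathcal{S}$ commutes with every other, for all $i,j$ we have $\llbracket s_i,s_j\rrbracket_d = C^A_{ij}+C^B_{ij} = 0$, so $C^B=-C^A$ and both restrictions carry the same commutation data. The construction runs entirely through the restriction to $A$. First, apply Lemma~\ref{lem:qudit6.11} to $S|_A\subseteq\overline{\mathcal{P}}_{|A|}$, obtaining a Gram-Schmidt generating set $S_1\cup S_2\cup U$ of $\mathcal{S}|_A$. Here $|S_1|=|S_2|=\Theta(M_{C^A})/2=r$, the elements of $S_1$ and $S_2$ are paired, $(g_k,h_k)$, with $\llbracket g_k,h_k\rrbracket\neq 0$ and all other commutators zero, and every element of $U$ commutes with all of $\mathcal{S}|_A$. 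The Gram-Schmidt procedure uses only invertible row operations over $\mathbb{Z}_d$ together with multiplication, and each such operation on elements of $\mathcal{S}|_A$ can be mirrored on full elements of $\mathcal{S}$. So for each new generator $x\in S_1\cup S_2\cup U$, tracking the same integer combination of the $s_i$ gives a lift $\tilde{x}\in\mathcal{S}$ with $p_A(\tilde{x})=x$. This lift is well defined modulo $\ker p_A=\mathcal{S}^B$, in line with the isomorphism $\varphi_A$. The lifts of $g_k,h_k$ are written $g_k\otimes\overline{g}_k$ and $h_k\otimes\overline{h}_k$.

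Second, verify the required commutation structure. Since full elements commute, $\llbracket \overline{g}_k,\overline{h}_k\rrbracket = -\llbracket g_k,h_k\rrbracket\neq0$, so the $B$-components also fail to commute. In particular $g_k$, $h_k$, $\overline{g}_k$ and $\overline{h}_k$ are all nontrivial, and each lifted pair acts nontrivially on both sides. The other commutation relations follow in the same way from the vanishing $A$-side commutators.

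Third, handle the lifts $\tilde{u}$ of $u\in U$, which is the main obstacle. I want each such lift, possibly after multiplying by elements of $\mathcal{S}^B$ and the pairs, to lie in $\mathcal{S}^A$. Equivalently, I need $u\otimes I_B\in\mathcal{S}$, i.e.\ the $B$-component $\overline{u}$ of $\tilde{u}$ must itself belong to $\mathcal{S}^B$. This is exactly where the hypotheses enter. The operator $I_A\otimes\overline{u}$ commutes with every element of $\mathcal{S}$, because its commutators with the $B$-parts vanish by the $C^B=-C^A$ relation, so it lies in the normalizer of $\mathcal{S}$.
\begin{itemize}
\item In case 1, $\mathcal{S}$ is a stabilizer state. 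Its normalizer coincides with $\mathcal{S}$ up to phases, so $I_A\otimes\overline{u}\in K\mathcal{S}$, and $\overline{u}\in\mathcal{S}^B$ after fixing the phase. The phase can be absorbed because $\mathcal{S}$ contains no nontrivial scalars.
\item In case 2, $A$ is correctable. The cleaning lemma (Lemma~\ref{lemma:cleaning}, extended to qudits) ensures that no nontrivial logical operator is supported on the complement in the relevant sense, so the normalizer element $I_A\otimes\overline{u}$ must again be a stabilizer.
\end{itemize}
I expect some care here with composite $d$, because orders of elements need not divide uniformly and ``dimension'' arguments are unavailable. I would instead argue directly with normalizer membership, using generator-by-generator lifting rather than counting.

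Finally, set $S^A=\{\tilde{u}\,(I_A\otimes\overline{u})^{-1}\}=\{u\otimes I_B\}$ and take $S^B$ to be any generating set of $\ker p_A=\mathcal{S}^B$. Check generation: any $s\in\mathcal{S}$ has $p_A(s)$ expressed in $S_1\cup S_2\cup U$. Multiplying $s$ by the inverse of the corresponding product of lifts leaves an element of $\ker p_A=\mathcal{S}^B$. Hence $S^A\cup S^B\cup S^{AB}$ generates $\mathcal{S}$ with $r=\Theta(M_{C^A})/2$. Minimality of the number of pairs is inherited from Lemma~\ref{lem:qudit6.11}.
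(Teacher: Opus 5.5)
Your overall route is the paper's: lift the Gram--Schmidt set of $\mathcal{S}|_A$, transfer the non-commutation of each pair to the $B$-side, use normalizer membership to localize the central generators, and take $S^B$ from $\ker p_A$. Two steps do not go through as written.

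\textbf{$S^A$ need not generate $\mathcal{S}^A$.} You set $S^A=\{u\otimes I_B: u\in U\}$ and only check that $S^A\cup S^B\cup S^{AB}$ generates $\mathcal{S}$. The lemma, however, asks for a generating set of the local subgroup itself.

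The normalizer argument, applied to an arbitrary central element, gives $\mathcal{S}^A=\mathcal{Z}(\mathcal{S}|_A)\otimes I_B$. For composite $d$ this center can be strictly larger than $\langle U\rangle$. When $\llbracket g_k,h_k\rrbracket$ is a zero divisor mod $d$, suitable powers of $g_k$ or $h_k$ are central in $\mathcal{S}|_A$ without lying in $\langle U\rangle$.

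Here is a concrete instance. Take $d=4$, one qudit in each of $A$ and $B$, and $S=\{X^2\otimes X^2,\ Z\otimes Z^{-1},\ Z\otimes Z\}$, which stabilizes $(|00\rangle+|22\rangle)/\sqrt{2}$. Then $\mathcal{S}|_A=\langle X^2,Z\rangle$ with pair $(X^2,Z)$ and $\llbracket X^2,Z\rrbracket=2$. Gram--Schmidt can reduce the third element to $Z\cdot Z^{-1}=I$, giving $U=\{I\}$. Yet $\mathcal{S}^A=\langle Z^2\otimes I\rangle$, so your $S^A$ misses it.

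The paper avoids this by proving $L_A=\mathcal{Z}(\mathcal{S}|_A)$ and lifting a generating set of the \emph{whole} center. You need to run your normalizer argument over all of $\mathcal{Z}(\mathcal{S}|_A)$, not just over $U$. For prime $d$ the two coincide, but this lemma is aimed at composite $d$.

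\textbf{Case 2 applies the hypothesis to the wrong side.} Correctability of $A$ means no nontrivial logical operator is supported on $A$. By the cleaning lemma, every logical class then has a representative supported on $B$. So your claim that none is supported on the complement is false, and a $B$-supported normalizer element such as $I_A\otimes\overline{u}$ cannot be declared a stabilizer on that basis.

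The conclusion does survive. The operator $u\otimes I_B=\tilde{u}\,(I_A\otimes\overline{u})^{-1}$ lies in the same coset of $\mathcal{S}$ and is supported on $A$. Hence it is in $\mathcal{S}$ up to phase, and then $I_A\otimes\overline{u}\in\mathcal{S}^B$. This is exactly how the paper argues, applied to $u_A\otimes I_B$.
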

\begin{proof}

    Let $S_{A1}\sqcup S_{A2}\sqcup U_A$ be the Gram-Schmidt generating set of $ \mathcal{S}|_A$ constructed from $S$ by Lemma~\ref{lem:qudit6.11}. $S_{A1}, S_{A2}$ is a collection of non-commuting pairs that has minimum possible size among all Gram-Schmidt generating sets, and
    \[
    \abs{S_{A1}} = \abs{S_{A2}} = \frac{\Theta(M_{C^A})}{2}, 
    U_A\subseteq \mathcal{Z}(\mathcal{S}
    |_A).
    \]
    For simplicity, let
    \[
    r=\frac{\Theta(M_{C^A})}{2}, t=\abs{U_A}.
    \]

    The construction takes three steps:

    \paragraph{Step 1. Construct the nonlocal pairs from $\mathcal{S}|_A$.}

    For each non-commuting pair $(s_{A1}, s_{A2})$ from $S_{A1}, S_{A2}$, $\llbracket s_{A1}, s_{A2}\rrbracket \neq 0$. Choose lifts $s_1, s_2\in\mathcal{S}$ satisfying \[
    p_A(s_1) = s_{A1}, p_A(s_2) = s_{A2}.
    \]
    Write
    \[
    s_1 = s_{A1}\otimes s_{B1}, s_2 = s_{A2}\otimes s_{B2}
    \]
    for some $s_{B1}, s_{B2}$.
    $s_1, s_2\in \mathcal{S} $ requires that 
    \[
    \llbracket s_1, s_2 \rrbracket = \llbracket s_{A1}, s_{A2}\rrbracket + \llbracket s_{B1}, s_{B2} \rrbracket \equiv 0 \pmod d.
    \]
    Therefore $\llbracket s_{B1}, s_{B2} \rrbracket \neq 0$. 
    It's obvious that $s_{A1}, s_{A2}, s_{B1}, s_{B2}$ are nontrivial.
    
    Now suppose $s_1'=s_{A1}'\otimes s_{B1}', s_2' = s_{A2}'\otimes s_{B2}'$ are lifts of a distinct non-commuting pair $(s_{A1}', s_{A2}')$. Since $s_1, s_2, s_1', s_2'\in \mathcal{S}$, they commute with each other. By the definition of non-commuting pairs,
    \[
    \llbracket s_{A1}, s_{A1}' \rrbracket = \llbracket s_{A1}, s_{A2}' \rrbracket = \llbracket s_{A2}, s_{A1}'  \rrbracket = \llbracket s_{A2}, s_{A2}'  \rrbracket = 0.
    \]
    Accordingly we have 
    \[
    \llbracket s_{B1}, s_{B1}' \rrbracket = \llbracket s_{B1}, s_{B2}' \rrbracket = \llbracket s_{B2}, s_{B1}'  \rrbracket = \llbracket s_{B2}, s_{B2}'  \rrbracket = 0.
    \]

    Therefore, the lifts of all non-commuting pairs $(s_{A1}, s_{A2})\in S_{A1}\times S_{A2}$, $\{(s_1, s_2)\}$, form a collection of non-commuting pairs which have nontrivial components on both $A$ and $B$. W.l.o.g., we denote this collection of non-commuting pairs by $G^{AB} = \{g_k\otimes \overline{g}_k, h_k\otimes \overline{h}_k\}_{k=1}^r$.

    \paragraph{Step 2. Construct $S^A$ from $\mathcal{Z}(\mathcal{S}|_A)$.} We first define
    \[
    L_A:=\{g_A\in \mathcal{S}|_A : g_A\otimes I_B \in \mathcal{S}\}
    \]
    and claim that $L_A = \mathcal{Z} (\mathcal{S}|_A)$.

    It's straightforward that $L_A\subseteq \mathcal{Z}(\mathcal{S}|_A)$. Conversely, for any $u_A\in \mathcal{Z}(\mathcal{S}|_A)$, $s=s_A\otimes s_B\in \mathcal{S}$, since \[\llbracket u_A, s_A\rrbracket =0,\]
    we have 
    \[\llbracket u_A\otimes I_B, s_A\otimes s_B\rrbracket = 0.\]
    Hence $u_A\otimes I_B\in N(\mathcal{S})$. When $\mathcal{S}$ uniquely determines a stabilizer state, $N(\mathcal{S})=\mathcal{S}$; when $\mathcal{S}$ defines a stabilizer code and $A$ is correctable, $u_A\otimes I_B$ cannot be a physical representative of any nontrivial logical operator by the cleaning lemma. So $u_A\otimes I_B\in \mathcal{S}$ in both cases and $\mathcal{Z}(\mathcal{S}|_A) \subseteq L_A$.

    We now prove that $\mathcal{S}^A\cong\mathcal{Z}(\mathcal{S}|_A)$. Consider the projection $p: \mathcal{S}^A \rightarrow \mathcal{Z}(\mathcal{S}|_A)=L_A$ defined as $p(g_A\otimes I_B) = g_A$ for any $g_A\otimes I_B\in \mathcal{S}^A$. If $p(g_A\otimes I_B) = I_A$, $g_A=I_A$. Therefore, the kernel is trivial, and $p$ is injective. For any $g_A\in L_A$, by definition $g_A\otimes I_B\in \mathcal{S}$, and $g_A\otimes I_B\in \mathcal{S}^A$, so $p$ is surjective. Hence $p$ is bijective and $\mathcal{S}^A\cong\mathcal{Z}(\mathcal{S}|_A)$. Therefore any minimal generating set of $L_A$ lifts bijectively to a minimal generating set of $\mathcal{S}^A$. W.l.o.g., suppose we choose $\{u_1, \dots, u_m\}$ as the minimal generating set of $L_A$. $S^A$ is then
    \[\{u_i\otimes I_B\}_{i=1}^m.\]
    
    \paragraph{Step 3. Construct $S^B$ from the lifted Gram-Schmidt generating set.} Recall that the Gram-Schmidt generating set in Lemma~\ref{lem:qudit6.11} is constructed using an invertible matrix $L\in \mathbb{Z}_d^{k\times k}$. For each $s'_{i}\in S'_A = S_{A1}\sqcup S_{A2}\sqcup U_A$, 
    \[
    s_i' = \prod_j (s_j|_A)^{L^{-1}_{ij} } 
    \]
    where $s_j\in S$. $s'_{i}$ has a lift 
    \[
    \tilde{s}_i = \prod_j (s_j)^{L_{ij}^{-1} } 
    \]
    satisfying that
    \[
    p_A(\tilde{s}_i) = s_i'.
    \]
    Define a map $\Psi:\mathbb{Z}_d^{2r+t}\rightarrow \mathcal{S}|_A$ that generates $\mathcal{S}|_A$
    \[
    \Psi(\vec{x}) = \prod_{i=1}^{2r+t} {s_i'}^{x_i}.
    \]
    The generating set of the kernel
    \[
    K:=\ker (\Psi)
    \]
    can be computed explicitly. W.l.o.g., suppose its generating set is 
    \[
    \{\kappa_1, \dots, \kappa_l\}
    \]
    where each $\kappa_i=(\kappa_{i1}, \dots, \kappa_{i,2r+t})\in \mathbb{Z}_d^{2r+t}$.
    Consider the lift
    \[
    b_i = \prod_{j=1}^{2r+t} \tilde{s}_j^{\kappa_{ij}}
    \]
    Since $\kappa_i\in \ker (\Psi)$, 
    \[
    p_A(b_i) = I_A.
    \]
    Therefore, 
    \[
    b_i\in\ker (p_A)=\mathcal{S}^B.
    \]
    For any $b\in\mathcal{S}^B$, 
    \[
    b = \prod_{j=1}^{2r+t} \tilde{s}_j^{x_j}
    \]
    for some $\vec{x}\in \mathbb{Z}_d^{2r+t}$.
    Since $b\in \mathcal{S}^B=\ker(p_A)$, 
    \[
    I_A = p_A(b) = \prod_{j=1}^{2r+t} {s_j'}^{x_j}.
    \]
    So $\vec{x}\in K.$
    Therefore, $\{b_i\}_{i=1}^{l}$ generates $\mathcal{S}^B$ and a minimal generating set $S^B$ can be computed from this set.

    To complete the proof, we now show that $G^{AB}$ forms a minimal generating set of $\mathcal{S}^{AB}$. 
    Since $S'_A=S_{A1}\sqcup S_{A2}\sqcup U_A=\{s_i'\}$ generates $\mathcal S|_A$ and the corresponding lifts $\{\tilde{s}_i\}$ generate $\mathcal S$, for every $s_i'\in U_A$, the lift $\tilde s_i$ lies in $\langle S^A,S^B\rangle$ by the constructions in Steps 2 and 3, while for every $s_i'\in S_{A1}\sqcup S_{A2}$, its lift $\tilde{s}_i$ lies in $G^{AB}$. So $\mathcal S=\langle S^A,S^B,G^{AB}\rangle$ and $G^{AB}$ generates the nonlocal subgroup $\mathcal{S}^{AB}$. Since $S_{A1}\sqcup S_{A2}$ forms a minimal set of non-commuting pairs of $\mathcal S|_A$, $r$ is the minimum possible number of such pairs. Since $G^{AB}$ contains exactly $r$ pairs, it is a minimal generating set of $\mathcal{S}^{AB}$.
\end{proof}

By replacing the subgroup dimensions with the size of minimal generating sets of the local subgroup, Theorem~\ref{thm:min_msmt_general} and Corollary~\ref{cor:min_msmt_code} can be generalized to qudits of arbitrary dimensions. We defer the details to future work, as we narrow our focus to prime-dimensional cases starting in the next subsection.

\subsection{Reducing the number of measurements in specific syndrome extraction gadgets\label{sec:adaptive_eec_schemes}}
In this section, we present a detailed classification and analysis of loss scenarios in specific syndrome extraction schemes. We will focus on Shor-style SE circuits and defer the results for other methods to the next version. There are also different methods for handling qubit losses, depending on the architecture. We discuss the one that equips each circuit with data TLDUs. 
Our results easily extend to other methods via tedious calculations on the stabilizer tableaux.

\subsubsection{Notations}

A Shor-style SE circuit extracts one bit(qit) syndrome information at one time. A $w$-qudit ancilla cat state (or a $w$-partite qudit GHZ state) is used to extract the syndrome corresponding to a generator of weight $w$. The controlled gates are applied between data qudits and syndrome qudits transversally.

Suppose we are working with $q$-dimensional qudits, use the $[[n,k,d]]_q$ stabilizer code $\mathcal{C}$ and measure a stabilizer
\[
P = P_{a_1}\cdots P_{a_w},
\]
where $w$ is the weight of $P$, and each $P_{a_j}$ is the nontrivial single-qudit component acting on the data qudit $q_{a_j}$. For consistency, we label the corresponding syndrome qudit by $a_j$.

During the measurement, assume that a subset $L_1$ of data qudits and a subset $L_2$ of syndrome qudits are lost. Let
\[
L = L_1 \cup L_2
\]
denote the set of lost qudits, so that the corresponding components of $P$ are not measured. The stabilizer measurement circuit instead measures $P^L\sim P^{\overline{L}}$ on the code state. $L$ also represents the set of data qudits affected by erased controlled operations.

\begin{definition}[Affected data qudits]
A data qudit is \emph{affected} by qudit losses in a Shor-style SE circuit if at least one of this data qudit and its corresponding syndrome qudit is lost during stabilizer measurement.
\end{definition}
We write
\[
\overline{L_1} = [n] \setminus L_1
\quad\text{and}\quad
\overline{L} = [n]\setminus L. 
\]

The local stabilizer subgroup on $\overline{L_1}$ is mostly not affected by erasures. The remaining affected subgroup has to be reconstructed on the unerased data qudits and fresh ancillas; this is implemented by subsequent stabilizer measurements, which we may incorporate into the measurement sequence of the current syndrome string extraction round. 

We now formalize the minimum number of stabilizer measurements required for EEC in a Shor-style SE circuit as the following lemma:

\begin{lemma}[Minimal EEC cost in Shor-style SE]\label{lem:shor_eec}
    In a Shor-style SE circuit, let $n,k,d$ be the code parameters, $P, L_1, L_2$ be defined as above, and $L=L_1\cup L_2$ be the set of affected data qudits. 
    
    When $\abs{L}\leq d-1$, we need $\abs{L_1}$ fresh ancillas, and at least $n-k-\dim(\mathcal{S}_{\mathcal{C}}^{\overline{L}})-\dim(\mathcal{S}_{\mathcal{C}}^{L_2 \setminus L_1}) - \dim (\mathcal{S}_{\mathcal{C}}^{L_1})-\dim (\mathcal{S}_\mathcal{C}^{L\overline{L}}\cap \mathcal{S}_{\mathcal{C}}^{\overline{L_1}}) + C$ stabilizer measurements to perform EEC, where 
    \[
    C=\begin{cases}
        1 & \text{ if }\exists s\in \mathcal{S}_\mathcal{C}^{L\overline{L}}\cap \mathcal{S}_{\mathcal{C}}^{\overline{L_1}} \text{ s.t. } [s,P^L]\neq 0,  \\
        0 & \text{otherwise}.
    \end{cases}
    \]

    For non-degenerate codes, 
    the minimum number of stabilizer elements to be measured is $n-k-\dim(\mathcal{S}_{\mathcal{C}}^{\overline{L}})$.
\end{lemma}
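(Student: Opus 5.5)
We prove this by specializing Corollary~\ref{cor:min_msmt_code} to the post-measurement state of the Shor-style circuit. First we compute the stabilizer group $\mathcal{S}$ of that state after losses and TLDU replacements. We then read off which part of $\mathcal{S}_\mathcal{C}$ survives on the unaffected registers.

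\textbf{Step 1: the post-measurement stabilizer group.} We track the stabilizer tableau through the circuit using the three TLDU loss cases of Fig.~\ref{fig:3tldu_loss}.

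On the lost data qudits $L_1$, the original columns are moved into $E$ together with the lost syndrome qudits. $|L_1|$ fresh ancillas are brought in as $B$; this gives the ancilla count in the statement.

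The circuit effectively measures $P^{\overline{L}}$ rather than $P$. The surviving stabilizers on $A=\overline{L_1}$ are therefore those elements of $\mathcal{S}_\mathcal{C}$ restricted to $A$ that commute with the partial measurement $P^{L}$ (equivalently $P^{\overline{L}}$, up to the unmeasured part). Concretely, we would show:
\begin{itemize}
\item every element of $\mathcal{S}_\mathcal{C}^{\overline{L}}$ survives;
\item every element of $\mathcal{S}_\mathcal{C}^{L_2\setminus L_1}$ survives, since these data qudits are untouched and only their ancilla partner was lost;
\item $\mathcal{S}_\mathcal{C}^{L_1}$ is not recovered on $A$, but it is accounted for on the fresh ancillas, in analogy with the term $\dim\mathcal{S}'^B$ of Theorem~\ref{thm:min_msmt_general};
\item the nonlocal elements of $\mathcal{S}_\mathcal{C}^{L\overline{L}}$ supported on $\overline{L_1}$ survive only if they commute with $P^L$.
\end{itemize}

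\textbf{Step 2: the intersection term.} Next we identify $(\mathcal{S}_\mathcal{C}^A/K)\cap(\mathcal{S}^A/K)$ with the span of the surviving pieces, using the canonical form (Definition~\ref{def:canonical}) for the bipartition $(L,\overline{L})$.

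For prime $q$, the elements of $\mathcal{S}_\mathcal{C}^{L\overline{L}}\cap\mathcal{S}_\mathcal{C}^{\overline{L_1}}$ that commute with $P^L$ form a subgroup of codimension at most one. The map $s\mapsto\llbracket s,P^L\rrbracket$ is a linear functional to $\mathbb{Z}_q$, so its kernel has codimension $C\in\{0,1\}$. This produces the $+C$ correction.

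Substituting into Corollary~\ref{cor:min_msmt_code} with $\dim\mathcal{S}_\mathcal{C}=n-k$ gives the formula. The hypothesis $|L|\le d-1$ ensures $|A|\ge n-d+1$, so the corollary applies, and the cleaning lemma (Lemma~\ref{lemma:cleaning}) guarantees that the logical information is preserved. For nondegenerate codes, any stabilizer supported on fewer than $d$ qudits is trivial, so the terms for $L_1$, $L_2\setminus L_1$ and the mixed intersection all vanish. One must check that the mixed term is indeed zero in this case, together with $C$, so that only $n-k-\dim\mathcal{S}_\mathcal{C}^{\overline{L}}$ remains.

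\textbf{Main obstacle.} The hardest step is Step 2: showing that the dimensions add without double counting. The subgroups $\mathcal{S}^{\overline{L}}$, $\mathcal{S}^{L_2\setminus L_1}$ and the commuting part of $\mathcal{S}^{L\overline{L}}\cap\mathcal{S}^{\overline{L_1}}$ must be independent modulo $K$ inside $\mathcal{S}_\mathcal{C}^{\overline{L_1}}$. We would establish this by choosing a generating set adapted simultaneously to the two bipartitions $(L_1,\overline{L_1})$ and $(L,\overline{L})$, refining the canonical form of Definition~\ref{def:canonical}.

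The second delicate point is the nondegenerate statement. It appears to require that, for nondegenerate codes, the mixed term $\dim(\mathcal{S}_\mathcal{C}^{L\overline{L}}\cap\mathcal{S}_\mathcal{C}^{\overline{L_1}})$ is compensated, or that the bound is meant after an optimal choice of measurement. We would treat this by a direct recount in that case.
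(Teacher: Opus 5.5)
\textbf{The general formula.} Your derivation follows the paper's sketch. You specialize Corollary~\ref{cor:min_msmt_code} with $A=\overline{L_1}$ and $B$ the $|L_1|$ fresh qudits. You keep $\mathcal{S}_\mathcal{C}^{\overline{L}}$ and $\mathcal{S}_\mathcal{C}^{L_2\setminus L_1}$, and let $\mathcal{S}_\mathcal{C}^{L_1}$ play the role of $\mathcal{S}'^B$. The effective measurement of $P^{\overline{L}}$ then costs at most one generator. Your linear-functional argument justifies $C\in\{0,1\}$ more cleanly than the paper's one-line remark. One correction: apply $s\mapsto\llbracket s,P^L\rrbracket$ to the subgroup $\mathcal{S}_\mathcal{C}^{\overline{L_1}}$, because $\mathcal{S}_\mathcal{C}^{L\overline{L}}\cap\mathcal{S}_\mathcal{C}^{\overline{L_1}}$ is not a subgroup. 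The functional vanishes on $\mathcal{S}_\mathcal{C}^{\overline{L}}$, which is disjoint from $\operatorname{supp}P^L$. It also vanishes on $\mathcal{S}_\mathcal{C}^{L_2\setminus L_1}$, which is disjoint from $\operatorname{supp}P^{\overline{L}}$ and commutes with $P$. Your ``main obstacle'' is not a real obstacle. Local subgroups on disjoint supports meet trivially, so the mixed term can only mean $m:=\dim\mathcal{S}_\mathcal{C}^{\overline{L_1}}-\dim\mathcal{S}_\mathcal{C}^{\overline{L}}-\dim\mathcal{S}_\mathcal{C}^{L_2\setminus L_1}$. The bound is then simply $n-k-\dim\mathcal{S}_\mathcal{C}^{\overline{L_1}}+C-\dim\mathcal{S}_\mathcal{C}^{L_1}$, and no simultaneous canonical form is needed.

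\textbf{The gap: the nondegenerate clause.} Your claim that nondegeneracy kills the mixed term is false. Its elements are supported on $\overline{L_1}$, which has at least $n-d+1$ qudits, so requiring weight $\ge d$ is no obstruction. Nondegeneracy only kills $\mathcal{S}_\mathcal{C}^{L_1}$ and $\mathcal{S}_\mathcal{C}^{L_2\setminus L_1}$, and that is all the paper's sketch uses. The sketch then drops $m$ and $C$ without comment, so it does not supply the missing step either. No recount can supply it, because the clause contradicts the general formula whenever $L_2\not\subseteq L_1$.

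For example, take the $[[5,1,3]]$ code, whose $15$ nontrivial stabilizers all have weight $4$. Suppose only the syndrome qubit of some $j\in\operatorname{supp}P$ is lost, so $L_1=\emptyset$ and $L=L_2=\{j\}$. Then $\dim\mathcal{S}_\mathcal{C}^{\overline{L}}=2$, $m=2$ and $C=1$. The general formula gives $4-2-0-0-2+1=1$. Indeed, measuring a single stabilizer that anticommutes with $P^L$ already returns the dephased state to a stabilizer subspace. The clause, however, asserts $2$.

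What you can prove for nondegenerate codes is that the minimum equals $n-k-\dim\mathcal{S}_\mathcal{C}^{\overline{L}}-(m-C)$. This reduces to the stated value when $L_2\subseteq L_1$, since then $\overline{L_1}=\overline{L}$ and $m=C=0$. Otherwise the stated value is only an upper bound. The reason is that $C=1$ yields an element outside $\langle\mathcal{S}_\mathcal{C}^{\overline{L}},\mathcal{S}_\mathcal{C}^{L_2\setminus L_1}\rangle$, so $m\ge C$ always. You should either restrict the clause to $L_2\subseteq L_1$ or state it as an upper bound.
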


\begin{proof}[Proof Sketch]
    The proof is straightforward by Corollary~\ref{cor:min_msmt_code} and the cleaning lemma. 
    $\mathcal{S}_\mathcal{C}^{\overline{L}}$ and $\mathcal{S}_{\mathcal{C}}^{L_2\setminus L_1}$ are the unaffected local subgroups on unerased data qudits, $\mathcal{S}_\mathcal{C}^{L_1}$ denotes the local subgroup that can be prepared on the fresh ancillas, although this preparation is not realistic in practice. $\mathcal{S}_\mathcal{C}^{L\overline{L}}\cap \mathcal{S}_{\mathcal{C}}^{\overline{L_1}}$ can be fully unaffected or have exactly one generator affected by erasures, depending on the commutation relation between $P^L$ and this subgroup. 
    For non-degenerate codes, $\mathcal{S}_{\mathcal{C}}^{L_2\setminus L_1}$ and $\mathcal{S}_\mathcal{C}^{L_1}$ are trivial because any non-identity stabilizer has weight no smaller than $d$.
\end{proof}

\subsubsection{Shor-style SE circuit equipped with data TLDUs\label{sec:shor_dtldu}}

In Lemma~\ref{lem:shor_eec}, we have not considered how to detect data qudit losses yet. One way is to apply a TLDU to each data qudit $q_{a_j}$ \cite{perrin2025quantumerrorcorrectionresilient,baranes2025leveragingatomlosserrors}, which we call \emph{data TLDU} for convenience. There are a few concerns with this construction.

\paragraph{Extra overhead.} This modification doubles both the number of controlled operations between data and syndrome qudits and the number of ancillas in each SE circuit. 

\paragraph{Undetected losses.}\label{sec:undetected_loss_shor} If no losses occur during the stabilizer measurement, the syndrome information extracted from the stabilizer measurement circuit is valid; however, in Case $2$ and $3$ in Fig.~\ref{fig:3tldu_loss}, the loss of the fresh qudit is not detected and becomes an input loss in the next SE circuit \cite{perrin2025quantumerrorcorrectionresilient}, regardless of whether the data qudit gets lost in the same TLDU or not. 

In the current circuit, we only handle detected losses. Lemma~\ref{lem:shor_eec} then leads to the following corollaries. 

\begin{corollary}[Minimal EEC cost in Shor-style SE using data TLDUs]\label{cor:eec_cost_shor_tldu}
In a Shor-style SE circuit where data qudit losses are detected by subsequent data TLDUs, suppose that a subset $L_3\subseteq [n]$ of data qudits is lost during its corresponding TLDU. Then the minimal number of stabilizers to be measured for EEC is the same as in Lemma~\ref{lem:shor_eec}, except that $L$ is replaced by
\[
L := L_1\cup L_2\cup L_3 .
\]
Here, $L_1\cup L_3$ is the set of data qudits whose corresponding data TLDUs detect a qudit loss.

    When $\abs{L}\leq d-1$, at least $n-k-\dim(\mathcal{S}_{\mathcal{C}}^{\overline{L}})-\dim(\mathcal{S}_{\mathcal{C}}^{L_2 \setminus (L_1\cup L_3)}) - \dim (\mathcal{S}_{\mathcal{C}}^{L_1}) -\dim (\mathcal{S}_\mathcal{C}^{L\overline{L}}\cap \mathcal{S}_{\mathcal{C}}^{\overline{L_1\cup L_3}}) + C$ stabilizer measurements is required to perform EEC,  where 
    \[
    C=\begin{cases}
        1 & \text{ if }\exists s\in \mathcal{S}_\mathcal{C}^{L\overline{L}}\cap \mathcal{S}_{\mathcal{C}}^{\overline{L_1\cup L_3}} \text{ s.t. } [s,P^L]\neq 0,  \\
        0 & \text{otherwise}.
    \end{cases}
    \]

    For non-degenerate codes, 
    the minimum number of stabilizer elements to be measured is $n-k-\dim(\mathcal{S}_{\mathcal{C}}^{\overline{L}})$.
    
\end{corollary}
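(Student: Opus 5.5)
The plan is to reduce Corollary~\ref{cor:eec_cost_shor_tldu} directly to Lemma~\ref{lem:shor_eec}. First I will show that, after the data TLDUs have run, the state and the syndrome-qudit register are in the configuration analysed in Lemma~\ref{lem:shor_eec}, with $L_1$ replaced by $L_1\cup L_3$. Then the stabilizer-counting argument (Corollary~\ref{cor:min_msmt_code} plus the cleaning lemma) carries over verbatim.

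\textbf{Step 1: how each TLDU outcome enters the tableau.} I would use the three cases of Fig.~\ref{fig:3tldu_loss}.
\begin{itemize}
\item A qudit in $L_3$ is lost inside its TLDU, after its controlled gate with the cat-state qudit was applied correctly. Its contribution to $P$ was measured, but the qudit itself is now an erasure that a fresh ancilla must replace. Relative to the code state it therefore behaves like an element of $L_1$: its column in the tableau is out of control, while the syndrome component on it was extracted.
\item Case~2, an undetected loss of the fresh qudit, is by assumption deferred to the next SE circuit. It contributes nothing to the detected set here, which is why only $L_1\cup L_3$ appears as the set of detected data losses.
\item If no loss occurs, the TLDU is a teleportation with a Pauli frame correction. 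It leaves the stabilizer group unchanged up to phases, and the phases are absorbed into the "up to phases" clause of Corollary~\ref{cor:min_msmt_code}.
\end{itemize}
Hence the affected set is $L=L_1\cup L_2\cup L_3$ and the erased set requiring fresh ancillas is $L_1\cup L_3$.

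\textbf{Step 2: classify the stabilizer subgroups exactly as in the proof sketch of Lemma~\ref{lem:shor_eec}.}
\begin{itemize}
\item $\mathcal{S}_\mathcal{C}^{\overline L}$ is untouched.
\item $\mathcal{S}_\mathcal{C}^{L_2\setminus(L_1\cup L_3)}$ is supported on unerased data qudits whose only defect was a lost syndrome qudit, so it is also untouched.
\item $\mathcal{S}_\mathcal{C}^{L_1}$ is the part that could in principle be prepared on the fresh ancillas.
\item The mixed subgroup $\mathcal{S}_\mathcal{C}^{L\overline L}\cap\mathcal{S}_\mathcal{C}^{\overline{L_1\cup L_3}}$ is disturbed only through the faulty projection onto $P^L$. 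This measurement can flip at most one independent generator of it, namely one that fails to commute with $P^L$, which gives the correction term $C$.
\end{itemize}
Substituting these dimensions into the formula of Corollary~\ref{cor:min_msmt_code} yields the stated count. The hypothesis $|L|\le d-1$ ensures that $L_1\cup L_3$ is correctable, so the cleaning lemma applies.

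\textbf{Step 3: the non-degenerate case.} Every nontrivial stabilizer has weight at least $d>|L|$. So the local groups on subsets of $L$ are trivial, and so is the mixed intersection: any element of it is nontrivial on $\overline L$, yet the count collapses because the untouched group on $\overline L$ together with the measured generators accounts for everything. This leaves $n-k-\dim\mathcal{S}_\mathcal{C}^{\overline L}$.

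\textbf{Main obstacle.} The main obstacle is Step~1 for qudits in $L_3$. I must verify that a loss occurring after the controlled gate but during the TLDU does not corrupt the extracted syndrome component in some way different from an $L_1$ loss. The risk is that a partially completed CNOT could entangle the cat state with a lost qudit. I would first try to check this on the tableau by tracking the CNOT and the TLDU CNOT explicitly, and by showing that the effect is equivalent to measuring $P$ and then erasing that column. If instead it behaves like measuring $P^{L}$, the reduction to Lemma~\ref{lem:shor_eec} still holds with the same $C$.
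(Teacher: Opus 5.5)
Your route is the paper's: treat the TLDU-detected losses $L_3$ exactly like $L_1$, then rerun the subgroup bookkeeping of Lemma~\ref{lem:shor_eec} via Corollary~\ref{cor:min_msmt_code} and the cleaning lemma. The paper justifies this substitution only by noting that a loss during the measurement and a loss during the TLDU give the same detection record and both leave the qudit affected.

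The obstacle you flag in Step~1 closes in one line, without any tableau tracking. Every stabilizer touching $L_1\cup L_3$ is re-prepared anyway, so the difference between the two loss events could only show up in $C$. That term involves only $s\in\mathcal{S}_{\mathcal{C}}^{\overline{L_1\cup L_3}}$, which act trivially on $L_3$. The measured operator differs from $P^{\overline{L}}$ only by factors on $L_3$, so $[s,\cdot]$ is unchanged. Since $[s,P]=0$, $s$ fails to commute with $P^{\overline{L}}$ exactly when it fails to commute with $P^{L}$.

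A smaller bookkeeping point: by your own Step~1 the fresh ancillas sit on $L_1\cup L_3$. A verbatim substitution therefore yields $-\dim\mathcal{S}_{\mathcal{C}}^{L_1\cup L_3}$, not the stated $-\dim\mathcal{S}_{\mathcal{C}}^{L_1}$. The stated lower bound is justified instead because TLDU-supplied ancillas are unentangled, so that term can be dropped.

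The genuine gap is Step~3. Non-degeneracy only forces stabilizers supported inside fewer than $d$ qudits to be trivial, i.e.\ $\mathcal{S}_{\mathcal{C}}^{L_1}$ and $\mathcal{S}_{\mathcal{C}}^{L_2\setminus(L_1\cup L_3)}$. It says nothing about $\mathcal{S}_{\mathcal{C}}^{L\overline{L}}\cap\mathcal{S}_{\mathcal{C}}^{\overline{L_1\cup L_3}}$, whose elements live on $(L_2\setminus(L_1\cup L_3))\cup\overline{L}$.

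A counterexample: take the $[[5,1,3]]$ code with $P=XZZXI$, $L_1\cup L_3=\{1\}$ and $L_2=\{2\}$, so $|L|=2=d-1$.
\begin{itemize}
\item The stabilizers trivial on qubit~1 are $I$, $IXZZX$, $IZYYZ$, $IYXXY$, all in the mixed part. So $m=2$, and $C=1$.
\item Your Step~2 count gives $4-2+1=3$, and three measurements do suffice. $IZYYZ$ commutes with the measured operator $Z_3X_4$ (with or without the factor $X_1$) and survives, and measuring $XZZXI$, $IXZZX$, $ZXIXZ$ completes it to $\mathcal{S}_{\mathcal{C}}$.
\item The non-degenerate clause instead asserts $n-k-\dim\mathcal{S}_{\mathcal{C}}^{\overline{L}}=4$.
\end{itemize}

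So the sentence ``the count collapses because\dots'' cannot be repaired. For non-degenerate codes your Step~2 formula becomes $n-k-\dim\mathcal{S}_{\mathcal{C}}^{\overline{L}}-m+C$, which equals the claimed value only when $m=C$. What $n-k-\dim\mathcal{S}_{\mathcal{C}}^{\overline{L}}$ actually counts is $|S^L|+|S^{L\overline{L}}|$, the affected canonical generators that Algorithm~\ref{alg:full_syn_ext} re-measures. That number is sufficient, i.e.\ an upper bound, not the minimum.

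The paper's sketch of Lemma~\ref{lem:shor_eec} likewise only shows that the two local terms vanish and is silent on the mixed term. This clause should therefore be argued as an upper bound, or under the extra hypothesis $m=C$.
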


We also update the definition of affected data qubits.

\begin{definition}[Affected data qudits]
A data qudit is \emph{affected} by qudit losses in a Shor-style SE circuit equipped with data TLDUs, if at least one of this data qudit and its corresponding syndrome qudit is lost.
\end{definition}

\begin{corollary}[Validity of syndrome information in the presence of losses]
    Under the same settings and parameters as in Corollary~\ref{cor:eec_cost_shor_tldu}, if $L\neq \emptyset$, the classical measurement outcomes are useless and no useful syndrome information can be obtained; \footnote{Unless its local components, $P^L$ and $P^{\overline{L}}$, are also stabilizer elements; however, this case should be excluded in practice, since low-weight stabilizers are preferred for measurement.} if $L=\emptyset$, the measurement outcomes yield the actual syndrome information of $P$. 
\end{corollary}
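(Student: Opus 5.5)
We argue by tracking the Shor-style circuit in the stabilizer formalism and splitting into the two cases $L=\emptyset$ and $L\neq\emptyset$.

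\textbf{The case $L=\emptyset$.} No data qudit, syndrome qudit, or data TLDU qudit tied to a detected loss is lost during the measurement. Hence every transversal controlled gate between $q_{a_j}$ and syndrome qudit $a_j$ acts as intended, and the cat-state ancilla accumulates the phase of every component $P_{a_j}$. Each data TLDU then acts as an ideal teleportation; at most it introduces a known Pauli frame correction (the $Z^a$ feed-forward) on the fresh qudit. Since that correction does not affect the ancilla outcome, the parity of the syndrome qudits equals the eigenvalue of $P=P_{a_1}\cdots P_{a_w}$. Undetected losses of fresh TLDU qudits (Cases 2 and 3 of Fig.~\ref{fig:3tldu_loss}) occur only after the controlled gates with the syndrome qudits. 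Such a loss affects the next circuit but not the current outcome, so the measured value is the true syndrome bit of $P$.

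\textbf{The case $L\neq\emptyset$.} Consider first a syndrome qudit in $L_2$. Its outcome is a loss symbol, or its controlled gate was erased. Either way the recorded parity is missing the contributions of the components on $L$. Now consider a data qudit in $L_1\cup L_3$. Its controlled gate was either erased or acted on a lost qudit. Its TLDU then reveals the loss, and a fresh qudit replaces it in a maximally mixed (or fixed, unentangled) state.

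In all cases, what the circuit actually implements on the encoded state is at best a measurement of $P^{\overline{L}}$, the restriction of $P$ to the unaffected qudits, as noted before Lemma~\ref{lem:shor_eec}. On lost or replaced qudits, it implements nothing coherent.

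\textbf{Why the outcome carries no information.} Assume, as the footnote excludes, that $P^{\overline{L}}$ is not itself a stabilizer element up to phase. The key step is to show that the outcome distribution is uniform over $\{\omega^a\}$ and independent of the syndrome of the input.

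Since $P\in\mathcal{S}_{\mathcal{C}}$ and $P^{L}P^{\overline{L}}=P$, the operator $P^{\overline L}$ is not in $\mathcal{S}$. It must therefore fail to commute with some element of the current stabilizer group on the register. That group has been enlarged by the depolarized or replaced qudits on $L_1\cup L_3$, whose stabilizers include single-qudit $X$ or $Z$ (or which are maximally mixed). For a data qudit in $L_1\cup L_3$ where $P$ acts nontrivially, the replaced qudit's Pauli anticommutes with the relevant component. This forces the measured eigenvalue to be uniformly random. We use the standard stabilizer-measurement fact: if the measured operator anticommutes with some stabilizer, the outcome is uniform.

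For pure syndrome-qudit losses ($L_2\setminus(L_1\cup L_3)$), the argument is different. The missing parity contributions from the lost cat-state qudits mean the recorded string is the parity of a proper subset of the cat qudits. The cat state's stabilizers $X^{\otimes w}$ anticommute with a partial $Z$-parity, so this outcome is again uniform.

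The main obstacle is to make this uniformity argument cover all three TLDU loss cases uniformly without a case explosion. I would first reduce every case to a partial-parity measurement on the cat state, and then apply the anticommuting-stabilizer lemma once.
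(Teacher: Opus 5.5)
Your $L=\emptyset$ case is fine. For $L\neq\emptyset$, however, you set out to prove something false: that the outcome is always uniform and independent of the input syndrome.

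A data qudit in $L_3$ is lost inside its TLDU, which comes \emph{after} its controlled gate with syndrome qudit $a_j$. So your sentence ``its controlled gate was either erased or acted on a lost qudit'' fails for $L_3$, and likewise for a qudit of $L_1$ lost after its own coupling. In that case the cat state has already recorded $P_{a_j}$. If, say, $L=L_3$, the ancilla parity is exactly the pre-loss eigenvalue of $P$; this follows from the same timing argument you use yourself for undetected fresh-qudit losses. No reduction to a partial-parity measurement is available here, because the parity is complete.

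The paper's justification is different. Where couplings are erased, the circuit effectively measures $P^{L}\sim P^{\overline{L}}$ rather than $P$. Crucially, a TLDU cannot distinguish a loss during the measurement from a loss during the TLDU, so no outcome accompanied by a detected loss can be certified as a syndrome bit of $P$. Moreover, once the lost qudit is replaced, the recorded bit no longer describes the post-gadget state: $P$ meets $L$, so it lies in the affected part that EEC must re-project (Corollary~\ref{cor:eec_cost_shor_tldu}). ``Useless'' has to be argued in this sense, not via uniformity.

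Even where uniformity does hold, your witnesses are wrong.

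\emph{Erased data couplings.} The measured operator $P^{\overline{L}}$ has no support on $L$, and the replacement qudits enter only after the coupling. Their single-qudit stabilizers therefore commute with $P^{\overline{L}}$ and randomize nothing. Also, the step ``$P^{\overline{L}}$ is not a stabilizer, hence it fails to commute with one'' requires a maximal stabilizer group, which the reduced state of the surviving register does not have. Use the distance instead. Since $\mathrm{wt}(P^{L})\le |L|\le d-1$, unless $P^{L}\in K\mathcal{S}_{\mathcal{C}}$ (the footnote case) we have $P^{L}\notin N(\mathcal{S}_{\mathcal{C}})$. Hence some $s\in\mathcal{S}_{\mathcal{C}}$ fails to commute with $P^{\overline{L}}$. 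Because the input lies in (a mixture of) eigenspaces of $s$, we have $s\rho s^{\dagger}=\rho$. Then $\operatorname{Tr}[(P^{\overline{L}})^{m}\rho]=\operatorname{Tr}[s^{\dagger}(P^{\overline{L}})^{m}s\,\rho]$ forces these expectations to vanish for $1\le m\le q-1$, which is exactly uniformity. That $s$ acts on lost qudits is harmless, since expectations on the surviving register equal the global ones.

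\emph{Syndrome-qudit losses ($L_2$).} $X^{\otimes w}$ is not a valid witness. Whether it fails to commute with the surviving partial parity depends on $w-|L_2|$ (its parity, for qubits) and on the basis convention. The right witness is a two-qudit cat stabilizer joining a surviving syndrome qudit to a lost one; the couplings leave it invariant. Note that this makes the $L_2$ outcome random even in the footnote case.
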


Note that data qudit losses occurring during stabilizer measurement and those occurring during TLDUs are indistinguishable, and $L_3\cap L_1=\emptyset$ by definition. However, since both cases leave the data qudits \emph{affected}, the same EEC procedure applies. In such circuits, if any qudit loss is detected, i.e., if $L=L_1\cup L_2 \cup L_3 \neq \emptyset$, the syndrome information cannot be considered valid.

If the next stabilizer to be measured acts trivially on such data qudits, its syndrome information is not affected by the input losses; otherwise, the next measurement detects the loss, and we update the measurement sequence accordingly.

\paragraph{Losses detected during EEC.} In the entire syndrome measurement sequences, EEC can be viewed as just updating the measurement sequence, for the purpose of simultaneously converting erasures into located Pauli errors and refreshing syndrome information carried by the affected subgroup. Losses detected during EEC need to be converted as well. It will be handled in every single round of syndrome string extraction again by updating the measurement sequence using the new erasure information.

We defer a detailed description of how to update the measurement sequence to Section~\ref{sec:adaptive_protocol}.

\section{Corrections and conditions for the mixed error model\label{sec:correction}}

We now state FTEC conditions when both losses and Pauli errors are present. In the rest of this paper, we will focus on qubits, and the results easily extend to qudits of prime dimension. We do not further discuss qudits of composite dimension.

\subsection{Finding the correction operator
\label{sec:corr_operator}
}

From the adaptive erasure error correction introduced above, when each generator from the minimal set for erasure error correction is measured,  the syndrome information it carries is randomly drawn from $\{0,1\}$. 

We begin by defining a minimal sequence that converts an erasure into a located Pauli error.

\begin{definition}
    Suppose a set of erased qubits of state stabilized by code $\cC$ has been replaced with fresh ancillas. A sequence of generators is \emph{minimal}, if:
    \begin{itemize}
        \item By measuring this sequence, the state can be mapped back to the stabilizer subspace of $\cC$.
        \item By measuring any subsequence, the state cannot be mapped back to the stabilizer subspace of $\cC$. 
    \end{itemize}
\end{definition}

The syndrome information extracted from these measurements is random, as measuring each generator in a minimal sequence serves two purposes.

\begin{lemma}
    If an erasure is converted into a located Pauli error 
    by a minimal sequence of generator measurements, then the measurement of each generator produces a random bit of syndrome information.
\end{lemma}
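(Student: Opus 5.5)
We will argue via the stabilizer formalism, using the standard fact that measuring a Pauli $g$ on a state stabilized by a group $\mathcal{T}$ gives a uniformly random outcome exactly when $g$ anticommutes with some element of $\mathcal{T}$. Otherwise $g$ or $-g$ lies in $K\mathcal{T}$ and the outcome is deterministic.

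Let the minimal sequence be $g_1,\dots,g_m$. Let $\mathcal{T}_0$ be the stabilizer group of the state right after the erased qubits are replaced by fresh ancillas. It is generated by the unaffected part of $\mathcal{S}_\cC$, namely the elements that survive on the unerased qubits as in Theorem~\ref{thm:min_msmt_general} and Corollary~\ref{cor:min_msmt_code}, together with the single-qubit stabilizers $Z_b$ of the fresh ancillas $b\in B$. Let $\mathcal{T}_j$ be the group after measuring $g_1,\dots,g_j$. By the update rule, if $g_j$ (up to sign) already lies in $K\mathcal{T}_{j-1}$, then $\mathcal{T}_j=\mathcal{T}_{j-1}$ up to phases and the outcome is deterministic. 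Otherwise $g_j$ must anticommute with some $t\in\mathcal{T}_{j-1}$. This is because $g_j$ commutes with every element of $\mathcal{S}_\cC$, so the only alternative is that $g_j$ is a logical-type operator relative to $\mathcal{T}_{j-1}$. We exclude that using correctability of $E$ and the cleaning lemma (Lemma~\ref{lemma:cleaning}): the logical information is carried by representatives on $A$, which commute with $g_j$. This exclusion is the step we expect to need the most care, since it must hold in the enlarged Hilbert space including the ancillas.

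The key claim is that minimality forces the first case never to occur. Suppose $g_j\in K\mathcal{T}_{j-1}$ for some $j$. Then measuring $g_j$ leaves the stabilizer group unchanged up to phase, so deleting $g_j$ from the sequence yields the same final group $\mathcal{T}_m$ up to phases. The subsequence would then already map the state back to the stabilizer subspace of $\cC$, where the syndrome is arbitrary and phases are handled later. This contradicts minimality. Hence every $g_j$ anticommutes with some element of $\mathcal{T}_{j-1}$, and its outcome is uniformly random.

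One subtlety remains: we must check that removing $g_j$ leaves the later measurements with the same effect. A later $g_{j'}$ that was random stays random, since the group before it is unchanged up to signs. So the final groups coincide up to phases, which is all the definition of minimality requires. We also note that the anticommuting partner of $g_j$ can be taken to be $Z_b$ for a fresh ancilla, or a punctured element of the affected subgroup $\mathcal{S}^{AB}$ from Definition~\ref{def:canonical}. This connects the argument with the counting in Corollary~\ref{cor:min_msmt_code}: each measurement in a minimal sequence raises the number of independent $\mathcal{S}_\cC$ generators by exactly one, and each is random.
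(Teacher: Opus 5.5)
Your overall strategy is the paper's. A measured generator either acts trivially on the current state (deterministic outcome) or anticommutes with one of its stabilizers (random outcome, one generator replaced), and minimality rules out the first case. Your check that deleting a deterministic $g_j$ leaves all later updates unchanged is a detail the paper leaves implicit.

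\textbf{The gap} is your exclusion of the third case, $g_j\in N(\mathcal{T}_{j-1})\setminus K\mathcal{T}_{j-1}$. You define $\mathcal{T}_0$ as the surviving part of $\mathcal{S}_{\mathcal{C}}$ on the unerased qubits together with the $Z_b$'s. That is the stabilizer group of the \emph{mixed} state on $A\cup B$ left after the lost qubits are traced out. Even after adjoining the logical information this group is not maximal, and its normalizer contains far more than logical operators. It contains the restrictions to $A$ of the nonlocal canonical pairs $g_k\otimes\overline{g}_k,\ h_k\otimes\overline{h}_k$ of the pre-erasure group, taken with respect to the cut between unerased and erased qubits (times any $Z$-type operator on $B$). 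Correctability and the cleaning lemma say nothing about these, and the case really occurs.

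Take the $[[4,2,2]]$ code $\langle XXXX,ZZZZ\rangle$ with qubit $4$ erased. For $A=\{1,2,3\}$, $\mathcal{S}_{\mathcal{C}}^A$ is trivial, so $\mathcal{T}_0=\langle Z_4\rangle$, and the logical representatives $XXII, ZIZI, XIXI, ZZII$ are supported on $A$. In the minimal sequence $(ZZZZ, XXXX)$, the first generator commutes with $Z_4$ and with every logical representative, and it is not in $K\mathcal{T}_0$. So your claim that every $g_j$ anticommutes with some element of $\mathcal{T}_{j-1}$ fails at $j=1$. Your closing remark has the same problem: a punctured element such as $h_k|_A$ is not a stabilizer of the state at all, since its expectation value is zero.

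The lemma itself survives. The outcome of $ZZZZ$ is still uniformly random, because $\langle ZZZI\rangle=0$ on any code state: $ZZZI$ anticommutes with $XXXX$, whose fourth factor now sits on the lost qubit. So the missing ingredient is the purification over the lost qubits. Keep them in the description as an inaccessible register $E$, as the paper's proof of Theorem~\ref{thm:min_msmt_general} does with $\mathcal{H}_E$, and carry the logical information as in the proof of Corollary~\ref{cor:min_msmt_code}. Then $\mathcal{T}_{j-1}$ is maximal, and the dichotomy ``$g_j\in K\mathcal{T}_{j-1}$ or $g_j$ anticommutes with some element of $\mathcal{T}_{j-1}$'' is exhaustive with no cleaning-lemma step. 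The anticommuting partner is typically an unpunctured element of the affected subgroup with support on $E$ ($X_1X_2X_3X_E$ in the example). Alternatively, show directly that every $g_j\notin K\mathcal{T}_{j-1}$ has $\mathrm{tr}(\rho g_j)=0$ on your mixed state. With either repair, your minimality argument finishes the proof unchanged.
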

\begin{proof}
    Suppose $G=\{g_1, \cdots, g_l\}$ is a minimal sequence of generators used for correcting erasures. For each $i$, the measurement of $g_i$ should effectively project back the state closer to the syndrome space. By the definition of measurement, if $g_i$ commutes with all stabilizer of the current state, $g_i$ does not change the state; otherwise, we can always choose a set of stabilizer generators so that $g_i$ anticommutes with exactly one of the generators, and the measurement of $g_i$ extracts a random output $\alpha_i$ 
    and replaces this generator with $\alpha_i g_i$ \cite[Section~10.5.3]{nielsen2010quantum}.

    If any $g_i$ commutes with all stabilizers of the state at the moment before its measurement, the subsequence $G\setminus \{g_i\}$ can be used to correct the same set of erasures as well, which leads to a contradiction. By minimality, each measurement of $g_i\in G$ should replace exactly one stabilizer generator of the state with $\alpha_i g_i$ and produce the random sign $\alpha_i$.
\end{proof}

Now, suppose we have erasures $L$ and an unlocated Pauli error $E$, along with a usable syndrome string. Existing loss-tolerant decoders provide efficient approaches to finding the correction operator, but the correctness of our protocol must be established independently of the decoder. It suffices to provide a lookup table that takes $L$ and a usable syndrome string as input, and outputs a correction operator. We now show how to construct such a correction operator.

\begin{theorem} 
    Given a state encoded by the stabilizer code $\cC$, a set of erased qubits $L$ with $\abs{L}\leq d-1$ that have been converted into a located Pauli error, and a usable syndrome string $\vec{s}_{use}$, 
    the correction operator for such a correctable combination of errors is uniquely determined by $(L, \vec{s}_{use})$, up to stabilizer operations.
\end{theorem}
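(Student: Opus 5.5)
The plan is to reduce the statement to the standard erasure-plus-Pauli decoding argument behind Theorem~\ref{thm:trade-off-mixed-correctability}. First I fix the setup. After EEC, the erased qubits in $L$ have been replaced and the state has been projected back into some syndrome subspace of $\mathcal{S}_\cC$. The data state therefore equals $E_{tot}\ket{\bar\psi}$ with $\ket{\bar\psi}\in\cC$ and $E_{tot}=E_L\cdot E$. Here $E_L$ is a Pauli operator supported on $L$; it is the located Pauli error produced by the replacement and the random projection outcomes of the preceding lemma. $E$ is the unlocated Pauli error, of weight $p$ with $\operatorname{wt}(\abs{L},p)\le t$. Because $\vec{s}_{use}$ is usable, it equals $\sigma(E_{tot})$ for the fixed generating set. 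The correction operator I propose is any $R$ attaining
\[
R\in\operatorname*{arg\,min}_{F}\{\operatorname{wt}(F|_{\overline{L}}) : \sigma(F)=\vec{s}_{use}\}.
\]
This is a lookup table indexed by $(L,\vec{s}_{use})$: among all Pauli operators with syndrome $\vec{s}_{use}$, take one of minimal weight outside $L$, with arbitrary support on $L$.

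Existence of a candidate is immediate, since $E_{tot}$ itself has syndrome $\vec{s}_{use}$ and $\operatorname{wt}(E_{tot}|_{\overline L})\le p$. The main step is uniqueness up to stabilizers. Take any two candidates $R_1,R_2$ with the same syndrome, each of weight at most $p$ off $L$; the true error $E_{tot}$ is one such candidate, and the minimizer is another. Then $R_1R_2^{\dagger}$ has trivial syndrome, so it lies in $N(\mathcal{S})$. Its support lies in $L\cup\operatorname{supp}(R_1|_{\overline L})\cup\operatorname{supp}(R_2|_{\overline L})$, so its weight is at most $\abs{L}+2p\le 2t<d$. Since every nontrivial logical operator has weight at least $d$, $R_1R_2^{\dagger}\in\mathcal{S}$, that is, $R_1\sim R_2$. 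Applying this with $R_1=E_{tot}$ shows that $R^\dagger E_{tot}\in\mathcal{S}$, so the correction restores the original codeword. The constraint $\operatorname{wt}(e,p)\le t$ is exactly what makes the bound $\abs{L}+2p<d$ hold; this is the same counting as in Theorem~\ref{thm:trade-off-mixed-correctability}. The hypothesis $\abs{L}\le d-1$ guarantees, via the cleaning lemma (Lemma~\ref{lemma:cleaning}), that the logical information survives the erasure on $L$, so the post-EEC state really is of the form $E_{tot}\ket{\bar\psi}$ with the original logical content.

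I expect the main obstacle to be justifying the form $E_{tot}=E_L E$ with $\sigma(E_{tot})=\vec{s}_{use}$. The random bits $\alpha_i$ produced during EEC flip the eigenvalues of the generators whose syndromes were refreshed. I would argue that each $\alpha_i g_i$ differs from $g_i$ by conjugation with some Pauli operator supported on $L$. Such an operator exists because, for correctable $L$, there is a Pauli on $L$ realizing any syndrome change of the affected subgroup, which follows from the canonical form in Definition~\ref{def:canonical} applied to the bipartition $(L,\overline L)$. Hence all of the randomness can be absorbed into $E_L$. A Pauli error that acts on an erased qubit before or after replacement is likewise absorbed into $E_L$, which is why Theorem~\ref{thm:trade-off-mixed-correctability} allows such overlap. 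The unlocated Pauli error acting on $\overline L$ is unchanged by this absorption. With this in place, the counting argument of the previous paragraph completes the proof.
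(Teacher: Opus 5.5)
Your proof is correct, but it takes a different route from the paper.

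\textbf{The paper's route.} The paper fixes the canonical generating set $S^L\sqcup S^{L\overline{L}}\sqcup S^{\overline{L}}$ for the bipartition $(L,\overline{L})$ and splits $\vec{s}_{use}$ into the three corresponding substrings. The $S^L$ bits come only from the located error and the $S^{\overline{L}}$ bits only from the unlocated one. The mixed bits $\vec{s}^{L\overline{L}}_{use}=\vec{s}^{L\overline{L}}_{\$}+\vec{s}^{L\overline{L}}$ cannot be apportioned between them. The paper's key observation is that this ambiguity is harmless. The mixed syndromes can be cleared by the $L$-local partners $h_k, g_k$ of the canonical pairs, so the assembled correction depends only on the sum. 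The cleaning lemma is then invoked for operators supported on $L$.

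\textbf{Your route.} You instead use a minimum-weight-outside-$L$ decoder and settle uniqueness by the count $|L|+2p\le 2t<d$. The canonical basis is not needed in your main step.

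\textbf{What each buys.} Your route makes explicit where the mixed-correctability bound $\tfrac12|L|+p\le t$ enters. The paper leaves this implicit, in the phrase ``correctable combination'' and in its assumption that the unlocated error is a reduced error. The paper also uses $\mathcal{E}(\vec{s}_1+\vec{s}_2)\sim\mathcal{E}(\vec{s}_1)\mathcal{E}(\vec{s}_2)$, which in general holds only modulo $N(\mathcal{S})$. So your distance argument is what actually excludes a logical discrepancy. The paper's route, in turn, gives an explicit lookup formula in the canonical basis, which matches the syndrome bookkeeping used later in the protocol.

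\textbf{Minor remarks.} Your third paragraph does more than the statement requires. The hypothesis already grants that the erasures have been converted into a located Pauli error, and the paper simply posits $E_{\$}$ on $L$. Also, a Pauli on $L$ realizing any sign pattern on the affected generators exists because their restrictions to $L$ are independent and the symplectic form is nondegenerate. Correctability of $L$ is needed only to preserve the logical content, not for that existence.
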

\begin{proof}
    
Suppose erasures are converted into a random Pauli error $E_{\$}$ on $L$. Without loss of generality, we can assume that $E$ is a destabilizer, and $\operatorname{supp} E\subset \overline{L}$, since if the local component of $E$ on $L$ is nontrivial, it's overwritten by the randomness introduced by the located Pauli error $E_{\$}$.

Fix the set of generators to be the canonical generating set $S^L \sqcup S^{L\overline{L}}\sqcup S^{\overline{L}}$ based on the partition $\{L, \overline{L}\}$, and suppose
\[
    S^L = \{ a_i \otimes I_B\} , S^{\overline{L}} = \{ I_A \otimes b_j \}, 
    S^{L\overline{L}} = \{ g_k \otimes \overline{g}_k, h_k \otimes \overline{h}_k\}.
\]

W.l.o.g., we assume that $g_k, \overline{g}_k, h_k, \overline{h}_k$ are all destabilizers.

By assumption, the syndrome string $\vec{s}_{use}$ is exactly produced by the located error $E_{\$}$ and the unlocated Pauli error $E$

\[
\vec{s}_{use} = \sigma(E_{\$}\cdot E) = \sigma(E_\$)+\sigma(E).
\]

We can decompose the syndrome string as the concatenation of the substrings produced by generators of the three subgroups  
\[
\vec{s}_{use} = \vec{s}_{use}^L \circ \vec{s}_{use}^{L\overline{L}} \circ \vec{s}_{use}^{\overline{L}}.
\]

Let $\vec{s}_{\$} = \sigma(E_{\$})$, $\vec{s} = \sigma(E)$ and decompose them into substrings in the same way. We have 
\[
\vec{s}_{use}^{L\overline{L}}= \vec{s}_{\$}^{L\overline{L}}+ \vec{s}^{L\overline{L}}.
\]

If $\vec{s}_{\$}^{L\overline{L}}$ is known, we can correct $E_{\$}$ and $E$ separately. Let the correction operator for $E_\$$ be $E^c_\$$ and the one for $E$ be $E^c$. 
Let $I_G(\vec{s}), I_H(\vec{s})$ be the set of indices of elements of $S^{L\overline{L}}$ that correspond to a nontrivial bit in $\vec{s}$ and take the form $g\otimes \overline{g}$ and $h\otimes \overline{h}$ respectively.

Firstly notice that the generators in $S^L, S^{\overline{L}}$ are only affected by Pauli errors on $L, \overline{L}$, respectively. Since $E^c=E$ by the assumption that $E$ is reduced, we have 
\[
E^c = \mathcal{E}(\vec{s}) = E \sim  \mathcal{E}(\vec{0}^L\circ \vec{0}^{L\overline{L}}  \circ  \vec{s}_{use}^{\overline{L}}) \cdot  \prod_{k\in I_G(\vec{s}^{L\overline{L}})} h_k \cdot \prod_{k\in I_H(\vec{s}^{L\overline{L}})} g_k 
\]
and
\[
E_\$^c \sim \mathcal{E}(\vec{s}_{use}^L \circ \vec{0}^{L\overline{L}} \circ \vec{0}^{\overline{L}})\cdot \prod_{k\in I_G(\vec{s}_{\$}^{L\overline{L}})} h_k \cdot \prod_{k\in I_H(\vec{s}_{\$}^{L\overline{L}})} g_k.
\]

Their product 
\[
E^c_{tot} = E^c_{\$} \cdot E^c \sim \mathcal{E}(\vec{s}_{use}^L  \circ \vec{0}^{L\overline{L}} \circ \vec{s}_{use}^{\overline{L}}) \cdot  
\prod_{k\in I_G( \vec{s}_{use}^{L\overline{L}})} h_k \cdot \prod_{k\in I_H(\vec{s}_{use}^{L\overline{L}})} g_k
\]
By the cleaning lemma, 
any correction operator supported only on $L$ causes no logical error, so the RHS is a unique correction operator independent of $\vec{s}_{\$}^{L\overline{L}}$ that corrects this combination of errors. 
\end{proof}

\subsection{Strong and weak FTEC conditions}

The following definitions generalize the standard FTEC conditions to the mixed error setting, based on Theorem~\ref{thm:trade-off-mixed-correctability}.

\begin{definition}[Strong FTEC conditions under the mixed error model]
An error correction protocol is \emph{strongly $t$-fault} for an $[[n,k,d]]$ stabilizer code ($t\leq \lfloor\frac{d-1}{2}\rfloor$)
, if the following two conditions are satisfied:
\begin{enumerate}
    \item \textbf{Error correction correctness property (ECCP)}: If the input error is a combination of $e_{in}$ erased qubits and $p_{in}$ single-qubit Pauli errors with weight $r=\frac{1}{2}e_{in}+p_{in}$, and $e_{f}$ newly erased qubits and $p_{f}$ internal Pauli faults occur during the protocol with $s=\frac{1}{2}e_{f} + p_f, r+s\leq t$, then an ideal decoder outputs the same codeword whether it is applied to the input state or to the output state of the protocol.
    \item \textbf{Error-correction recovery property (ECRP):} Regardless of the weight of the input Pauli error, if $e_{f}$ qubits have been erased, and $p_{f}$ internal Pauli faults occur during the protocol with $s=\frac{1}{2}e_{f} + p_f \le t$, then the output state, with all erasures corrected, differs from a valid codeword by a Pauli error of weight at most $s$.

	\label{def:strong_FT_mixed}
\end{enumerate}
\end{definition}

\begin{definition}[Weak FTEC conditions under the mixed error model]
An error correction protocol is \emph{weakly $t$-fault} for an $[[n,k,d]]$ stabilizer code ($t\leq \lfloor\frac{d-1}{2}\rfloor$), 
if the following two conditions are satisfied:
	\begin{enumerate}
    \item \textbf{ECCP:} If the input error is a combination of $e_{in}$ erased qubits and $p_{in}$ single-qubit Pauli errors of weight $r=\frac{1}{2}e_{in}+p_{in}$, and $e_{f}$ newly erased qubits and $p_{f}$ internal Pauli faults occur during the protocol with $s=\frac{1}{2}e_{f} + p_f, r+s\leq t$, then an ideal decoder outputs the same codeword whether it is applied to the input state or to the output state of the protocol.
    \item \textbf{ECRP:} If the input error is a combination of $e_{in}$ erased qubits and $p_{in}$ single-qubit Pauli errors of weight $r=\frac{1}{2}e_{in}+p_{in}$, and $e_{f}$ newly erased qubits and $p_{f}$ internal Pauli faults occur during the protocol with $s=\frac{1}{2}e_{f} + p_f, r+s\leq t$, then the output state, with all erasures corrected, differs from a valid codeword by a Pauli error of weight at most $s$.
	\end{enumerate}
	
	\label{def:weak_FT_mixed}
\end{definition}

\section{Adaptive Shor-style syndrome measurements under the mixed error model\label{sec:adaptive_protocol}}

The process of converting erasures into located Pauli errors is simply measuring a sequence of stabilizers and can be integrated into the full syndrome measurement sequence. By dynamically selecting the next stabilizers to measure, we achieve adaptivity at both levels. In this section, we start by formally introducing the notations and building blocks. We then present syndrome measurement protocols that generalize those in \cite{tansuwannont2023adaptive}.

\subsection{One round of syndrome string extraction}
In a single round of syndrome string extraction, we want to obtain valid syndrome bits for a minimal generating set of stabilizers. This is the least requirement for finding a usable syndrome string to apply Pauli error correction.

Recall that in Section~\ref{sec:shor_dtldu}, we discussed that no valid syndrome bit can be extracted from a Shor-style SE circuit if any qubit loss is detected. A subgroup of stabilizers is affected by erasures, and its minimal generating set forms a minimal sequence for EEC.

We initially pick a set of generators. If we always want to pick the minimal sequence for EEC, the set of generators may change over time, requiring additional classical processing. However, the classical processing can be avoided by always using the same set of generators, and the increase in the number of measurements is generally acceptable for good codes.

Suppose during the stabilizer measurement of $g_i$, a set of data qubits $L$ is affected by erasures. We pick a canonical set of generators $G=S^L\sqcup S^{L\overline{L}}\sqcup S^{\overline{L}}$ based on the partition $L$ and $\overline{L}$. The obtained syndrome information can be remapped to the new syndromes for this set of generators.

The sequence of generators to be measured is then updated as the union of the set of generators in $G$ that are affected by erasure, and the set of generators in $G$ of which the syndrome information is not affected but has not been obtained yet. The length of the measurement sequence increases by at most $\abs{S^L} + \abs{S^{L\overline{L}}}$. If a different set of data qubits $L'$ is affected by an erasure at a later stage, we again update the sequence and syndrome information based on $L'$.

For any stabilizer code of distance $d$, the stopping and rejection conditions in a single syndrome extraction round are defined as follows:
\begin{itemize}
    \item \textbf{Stop when $d-1$ distinct data qubits have been erased and replaced.} 
In that case, due to the code's limited correctability, the protocol should stop and apply a correction if using the latest fully valid syndrome string, aiming only to correct the located error converted from the erasures. 
    \item \textbf{Reject when more than $d-1$ distinct data qubits have been erased and replaced.} Simply because this exceeds the correctability of the code.
\end{itemize}

The algorithm for one round of syndrome string extraction is given in Appendix~\ref{sec:subroutines}.

At the end of each round, we obtain a syndrome string $\vec{s}$ for a set of $r=n-k$ generators $G=\{g_1, \cdots, g_r\}$, as well as an erasure pattern about when and where the qubit losses are detected, and a syndrome string $\vec{s}$ of length $r$.

Let $p_j$ be the number of generators that have been measured before the stabilizer measurement, where the $j$-th data qubit is first erased in a round:
\begin{enumerate}
    \item If there is an input loss on qubit $j$, let $p_j = 0$;
    \item If qubit $j$ is firstly erased during measuring $g_k$, let $p_j=k$;
    \item If qubit $j$ is never erased in the current round, let $p_j=r+1$.
\end{enumerate}
Let $R=\{i: p_i \leq r\}$ be the set of qubits that have been erased at least once in this round. If $\vec{s}$ is fully valid, the usability of $\vec{s}$ and $R$ will be checked by Algorithm \ref{alg:tpb23_1}; otherwise, the entire syndrome measurement protocol is either stopped or rejected because the number of erasures reaches the limit.

\subsection{Syndrome extraction protocols satisfying the modified FTEC conditions\label{sec:strong_ft_new}}

Now let $G_i$ be the set of generators used at the end of round $i$, $R_i$ be the set of qubits that have been affected by erasure at least once in round $i$, $R_{> i} = \cup_{j> i} R_j$, and correspondingly, $R_{\leq i } = \cup_{j\leq i} R_j$. Let $\vec{s}_i$ be the syndrome string obtained in round $i$ using $G_i$ as the basis. Let $\varphi_i: \mathbb{F}_2^r \rightarrow \mathbb{F}_2^r$ be the map that transforms $\vec{s}_i$ from the basis $G_i$ to $G_{i+1}$, and $A_i=\{g: g\in G_i, \operatorname{supp}g\ \cap R_i \neq \emptyset  \}\subseteq G_i$ be the set of generators in $G_i$ whose syndrome information has been refreshed in round $i$. 

Given the erasure information and the syndrome string, if round $i$ has no internal Pauli faults, as long as the mixed error at the end of the round stays correctable, there exists a correction operator depending on $R_{\leq i}$ and $\vec{s}_i$ according to Section~\ref{sec:corr_operator}.

\paragraph{The difference vector.} Suppose that the input Pauli error of round $i$ is $E_i$. If no internal Pauli fault occur during round $i$ and round $i+1$, the two consecutive syndrome string $\vec{s}_i$ and $\vec{s}_{i+1}$ satisfy the following relation 
\begin{equation}\label{eq:syn_str_comp}
\varphi_i(\vec{s}_i)|_{\overline{A_{i+1}}} = \vec{s}_{i+1}|_{\overline{A_{i+1}}}.
\end{equation}

We now track whether we can detect an internal Pauli fault in every two consecutive rounds. We adopt the term \emph{difference vector} from \cite{tansuwannont2023adaptive} and borrow some related notions. The difference between two consecutive syndrome strings in the Pauli-error model directly implies the existence of internal faults.  Note that the operational role of the difference vector is as follows:
\begin{enumerate}
    \item If two consecutive syndrome strings differ, we can ensure that at least one Pauli fault occurs during round $i$ and round $i+1$.
    \item If two consecutive syndrome strings are identical, we cannot distinguish between the case where no Pauli fault occurs and the case where the syndrome patterns produced by multiple faults cancel.
\end{enumerate}

Under the mixed error model, we instead check syndrome substrings. Denote the difference vector by $\vec{\delta}$. Let $\delta_i = 0$ if \eqref{eq:syn_str_comp} is satisfied; otherwise, let $\delta_i=1$.

Recall that in \cite{tansuwannont2023adaptive}, $\vec{\delta}$ is viewed as a sequence of all-zero substrings split by ones, $\vec{\delta}= \eta_11 \eta_2 1 \cdots 1\eta_c$, where the first and the last all-zero substrings can have zero length. We may compute $\alpha_j$ (the minimum possible number of Pauli faults before the $j$-th all-zero vector), $\beta_j$ (the minimum possible number of Pauli faults after the $j$-th all-zero vector), $\gamma_j$ (the maximum possible number of Pauli faults that can occur in the rounds corresponding to the $j$-th all-zero vector), and then find a usable all-zero substring in the same way based on $\vec{\delta}$.

\begin{algorithm*}[ht]\caption{(Restatement of Algorithm 1 in \cite{tansuwannont2023adaptive}) finds a usable all-zero substring}\label{alg:tpb23_1}
\AlgoDontDisplayBlockMarkers\SetAlgoNoEnd\SetAlgoNoLine%
\SetKwInOut{KwIn}{Input}
\SetKwInOut{KwOut}{Output}
\SetKw{KwTo}{to}
\SetKw{KwDownTo}{downto}
\SetKwFunction{IsErasureUncorrectable}{IsErasureUncorrectable}
\KwIn{
$t_\text{in}$, the maximal number of correctable Pauli faults.
$\delta_\text{in} = \eta_1 1\eta_21\cdots 1\eta_c$, a difference vector with $c$ all-zero substrings
}
\KwOut{A usable all-zero substring $\eta_j$, or a symbol $\times$ that represents none of the substrings $\eta_j$ is usable.}
\For{$j\gets c$ \KwDownTo $1$}{
    calculate $\alpha_j, \beta_j$

    $\gamma_j \gets \abs{\eta_j}$
    
    \If{$\alpha_j+\beta_j+\gamma_j\geq t_\text{in}$}{
        \Return{$\eta_j$}
    }
}
\Return{$\times$}
\end{algorithm*}

\paragraph{The time overhead.} 
An FTEC protocol correcting up to a combination of internal erasures and Pauli faults with weight at most $t$ can take the parameter $t_{in}=\lfloor t-e/2  \rfloor$ to be less than $t$ when erasures occur, where $e$ is the number of data qubits that have been affected by erasures. Therefore, the presence of erasure may let the protocol stop and apply the correction at an earlier stage. However, the worst case occurs when only Pauli errors are present. Therefore, the maximum number of rounds remains the same. Since the protocol can tolerate up to $d-1$ erasures, there are at most $2(d-1)$ additional stabilizer measurements in the entire sequence. The rejection condition is also defined as ``reject when more than $d-1$ distinct data qubits have been affected by erasures''.

\paragraph{The correction operator.} 
The usable syndrome string may be from the middle of multiple rounds. We only need to apply the correction to the subset of qubits that are not later affected by erasures. Suppose the total number of syndrome string extraction rounds is $N$. 
For the usable syndrome string $s$ obtained from $\eta_j = \delta_{s_j} \cdots \delta_{t_j}$, the corresponding correction operator $E^c_i$ according to Section~\ref{sec:corr_operator} is able to correct a combination of located and unlocated Pauli errors at the end of that round.

The Pauli correction we apply at the end is $E^c_i |_{\overline{R_{>i}}}$. 
Note that there may be multiple rounds of syndrome string extraction yielding a usable all-zero substring; however, the resulting Pauli correction operator is the same in all such cases.

Protocols satisfying the strong and weak FTEC conditions under the mixed error model can be extended from the original protocols under the standard Pauli error model \cite{tansuwannont2023adaptive}, differing in the rejection condition, the computation of the difference vector $\delta$, and the choice of the Pauli correction operator. 

We now present the protocol that satisfies the modified strong FTEC conditions.

\begin{protocol}{loss-tolerant FTEC protocol satisfying the strong FTEC conditions for a stabilizer code of arbitrary distance}
\label{pro:strong_dany}
    Given a generating set $S$ of the code, an input state $\rho$, perform the following:
    \begin{enumerate}
        \item Initially set $R=\emptyset$.
        \item In the $i$-th round, obtain $(\vec{s}_i, R_i, \rho_i, S_i)$ from \textsc{ SyndromeStringExtraction}$(\rho, S, R)$. Update $R$ by $R\gets R_i$. If a rejection symbol is returned, 
        \textbf{reject}; if $\vec{s}_i$ is not fully valid, \textbf{stop the protocol and look for a correction}.  
        \item 
        
        After the $i$-th round ($i\geq 2$) compute $\delta_{i-1}$. \textbf{Stop the protocol and look for a correction} if one of the following conditions is satisfied:
        \begin{enumerate}
            \item If at least one usable $\eta_j$ is found by Algorithm $1$, with $t_{in} =\lfloor  t-\abs{R_i}/2\rfloor, \delta_{in} = \delta_1\cdots \delta_{i-1}$, stop the protocol and perform Pauli error correction based on the erasure information and the syndrome string from a round covered by $\eta_j$.
            \item If the total number of non-overlapping $11$ substrings in $\vec{\delta}$ is exactly $\lfloor t-\abs{R_i}/2 \rfloor$, stop and perform Pauli error correction using the correction operator $E'$ corresponding to the latest fully valid syndrome string.
        \end{enumerate}        
        \item 
        Repeat syndrome measurements unless the protocol is either rejected or stopped.  If the protocol stops but no fully valid and usable syndrome string exists, \textbf{reject}.
    \end{enumerate}
\end{protocol}

The FTEC conditions are satisfied because the Pauli correction we apply removes the destabilizer components from the input error and from Pauli faults before a certain round. Pauli faults that occur or are introduced by erasure error correction after that round remain in the correctable region, as long as the total number of internal erasures and internal Pauli faults does not exceed the limit.

In very rare cases (when we learn that the combination of errors must have exceeded the code's correctability), the protocol is rejected; otherwise, the upper bound on the number of syndrome string extraction rounds depends on both the code distance and the number of erased qubits.

\begin{remark}
    If $e$ data qubits have been erased, the maximum rounds of syndrome string extraction is
    \[
    \begin{cases}
        \max(1, (\frac{t'+3}{2})^2-3) & t' \text{ odd},\\
        \max((\frac{t'+2}{2})(\frac{t'+4}{2})-3) & t'\text{ even} 
    \end{cases}
    \]
    with $t' = \lfloor t-\frac{\#\ erasures}{2} \rfloor$ and $t=\lfloor \frac{d-1}{2}\rfloor$.
\end{remark}

The protocol satisfying the weak FTEC conditions can be generalized in a similar manner; we omit the details here for brevity.

\section{Discussion and future directions \label{sec:discuss}}

There are several results we have not had the time to treat in this version: 
\begin{enumerate}
    \item The numerical simulation results. Since our adaptive protocols require immediate decision-making, we face the same issues as \cite{berthusen2025adaptive}, that is, we cannot take advantage of the high-performance features provided by current packages for stabilizer simulation, but we have to instead simulate the protocols step by step. Therefore, we leave them to the next version.
    \item Results about the minimal overhead of erasure error correction on SE gadgets of other styles. Roughly speaking, regardless of whether the syndrome is obtained by one bit at a time or multiple syndrome bits are obtained at once, the overhead depends on the number of data qubits affected by erasures, while the validity of syndrome information is affected by any type of qubits that are erased. We then conjecture that FTEC schemes with more ancillary qubits per circuit are less loss-tolerant\footnote{Qubits that can be measured and reset are counted once per reset operation.}. No background knowledge beyond the stabilizer formalism is required to obtain these results, so we have decided to delegate all the tedious details to a future version or to a separate article.
    \item Results for other loss detection methods. So far we have only discussed the use of data TLDUs in this paper. For other loss detection methods, the syndrome information may be affected differently. For example, in SWAP-based syndrome extraction, the validity of a measured syndrome bit may not be known immediately; therefore the syndrome measurement protocols need to be modified accordingly.
\end{enumerate}

We outline possible directions for future work.   
\paragraph{Noise models.}
So far we have only considered the standard Pauli error model together with qubit losses. 
However, more realistic noise models may include additional effects, such as imperfect teleportation and biased noise. 
It would also be interesting to extend the analysis to adversarial noise models and to settings where qubit loss occurs in interactive protocols.
\paragraph{The mixed error correctability.} Several papers \cite{baspin2022connectivity, dai2025locality} have discussed the trade-off between the size of the correctable region and the connectivity of a code. It may be worth discussing the correctable set of unlocated Pauli errors, or the maximum possible weight of such correctable Pauli errors 
given a set of erasures. 

\paragraph{Code puncturing.} A stabilizer code can be punctured by deleting several coordinates (physical qudits). Recall that in Section~\ref{sec:loss_erasures_and_loss_detection}, we mentioned that regardless of how a data qubit is affected by qubit losses, the affected stabilizer subgroup will stay the same. This holds when multiple qubits are lost. The size of the minimal generating set of the affected subgroup upper bounds the additional overhead introduced by adaptive erasure error correction, namely the increase in the measurement sequence length. This is, in turn, loosely upper bounded by twice the number of lost qubits, a number independent of the code parameters. We think this is conceptually related to Open Problem 2 of \cite{delfosse2020short}, which asks whether puncturing one coordinate increases the length of a measurement sequence by at most an additive constant. It is also possible to use results from puncturing \cite{grassl2021algebraic, gundersen2025puncturing}, though the intuition of puncturing codes is more likely to provide a way to construct good codes from existing ones, different from ours.

\paragraph{Overhead.} The FT overhead is also related to the stabilizer weights. In our protocol, we choose the canonical generating set of stabilizer groups according to the bipartite partition of the system. The total weight of these generators is not required to be minimal, and updating these generators in this way may not always be practical. We have not taken this next step, as several prior work has not either. This is not a major issue for good codes with low-weight stabilizers. We hope that someone will do this in the near future.
\paragraph{Code concatenation. }\cite{berthusen2025adaptive} constructs a family of codes by concatenating a small detection code with a hypergraph code. Using low-level error detection, their scheme locates Pauli errors and thereby reduces the number of stabilizers to be measured. The underlying idea is similar to adaptive erasure error correction. It would be interesting to investigate whether code concatenation can also improve loss tolerance, though it is a different problem from the one studied here.
\section{Acknowledgments}
YW would like to thank Prof.~Todd Brun for guidance and support above and beyond the call of duty; the Departments of Electrical and Computer Engineering and Mathematics for their support; Zihan Xia for many helpful discussions over the years, including early-stage brainstorming on qudit codes dating back to the summer of 2022, long before this work took shape, conversations about adaptive erasure error correction during our trip to the UCLA East Asian Library on Oct.~9,~2025, near the completion of this part, our occasional reading and discussions of \cite{delfosse2020short, tansuwannont2023adaptive}, and for following the ups and downs of this work; Miles Gorman for an insightful discussion on syndrome extraction, which saved me from getting lost in too many schemes; and last but not least, Edward Kim for many recent related conversations in our office, for our shared research interests and experiences, though we missed the chance to work together. 

\medskip
\begin{CJK*}{UTF8}{bsmi}
\noindent\textit{敘懷經久不成悲，許將今日暫伸眉。}
\end{CJK*}
\printbibliography
\appendix
\section{Proofs\label{sec:proofs}}
\subsection{Proof of Theorem~\ref{thm:min_msmt_general}\label{proof:min_msmt_general}}
\begin{proof}
    The stabilizer tableau of the final state on $\cH_A\otimes \cH_B$, up to phases on the generators, can be written as
    \[
        \begin{array}{c}
    \overbracket{\phantom{sssss}}^{B} \phantom{ss}\overbracket{\phantom{sssss}}^{A}\\
    \begin{array}{ccc}
      I &  & \mathcal{S}'^A \\
      &\mathcal{S}'^{BA}  &\\
      \mathcal{S}'^B & &I
    \end{array}
    \end{array}
    \]
     We can prepare the $\abs{B}$ ancillary qub(d)its on $\cH_B$ optimally. Suppose the state is stabilized by $\mathcal{S}_{anc}$. We may choose $\mathcal{S}_{anc}= \mathcal{S}'^B$ in the special (and trivial) case $\abs{B}= \dim \mathcal{S}'^B$, and $\mathcal{S}_{anc}\supset \mathcal{S}'^B$ otherwise. 
    The stabilizer tableau of the initial joint state on $\cH_B\otimes \cH_A\otimes \cH_E$ can be written as:
    \[
    \begin{array}{l}
    \phantom{s}\overbracket{\phantom{ss}}^{B}\overbracket{\phantom{ssssssss}}^{A} \overbracket{\phantom{ssssssss}}^{E}\\
    \begin{array}{ccccl}
      I & \mathcal{S}_1 &  & I &\text{no need to convert}\\
      I & \overline{\mathcal{S}_1} &  & I &\text{\text{}}\\
      I & &\mathcal{S}^{AE}  &&\text{}\\
      I & I & &\mathcal{S}^E&\text{to be discarded}\\
      \mathcal{S}'^B & & I && \text{no need to convert}\\
      \overline{{\mathcal{S}'}^B}
      & & I && \text{}
    \end{array}
    \end{array}
    \]
    where 
    $\mathcal{S}^{AE} = \mathcal{S}\setminus (\mathcal{S}^A\sqcup \mathcal{S}^E)$, 
    $\mathcal{S}_1 = \mathcal{S}^A\cap K\mathcal{S}'^A$,  
    $\overline{\mathcal{S}_1} = \mathcal{S}^A\setminus \mathcal{S}_1$, 
    and $\overline{{\mathcal{S}'}^B} = \mathcal{S}_{anc}\setminus \mathcal{S}^B$. If any of them is empty, the corresponding rows are then removed from the tableau.
    Note that $\mathcal{S}^A$ is decomposed into $\mathcal{S}_1\sqcup \overline{\mathcal{S}_1}$, and 
    the extension of $\mathcal{S}_1$ 
    on $\cH_A\otimes \cH_B$ is contained in $K\mathcal{S}'$.
    
    It follows that to convert this state to the final state on $\cH_A\otimes \cH_B$, the number of stabilizers required to be measured is equal to the dimension of the subspace of $\mathcal{S}$ which requires projection:
        \[  \dim \mathcal{S}' - \dim \mathcal{S}_1-\dim \mathcal{S}^B= 
    \dim \mathcal{S}' - \dim ((\mathcal{S}^A/K) \cap (\mathcal{S}'^{A}/K))
    -\dim \mathcal{S}^B.
    \]
    
\end{proof}
\subsection{Proof of Corollary~\ref{cor:min_msmt_code}\label{proof:min_msmt_code}}
\begin{proof}
    Let $\mathcal{S}_L$ be the stabilizer subgroup carrying the unknown logical information on $\cH_A\otimes \cH_E$ and $\mathcal{S}_{\mathcal{C}L}$ be the subgroup carrying the same logical information on $\cH_A\otimes \cH_B$. 

    By the cleaning lemma, when $E$ is correctable, there exists a set of generators supported only on $A$ that generates both $\mathcal{S}_L$ and $\mathcal{S}_{\mathcal{C}L}$. Obviously, we have
    \[
    \mathcal{S}_{\mathcal{C}L}/K \cong \mathcal{S}_L /K, \dim \mathcal{S}_L = \dim \mathcal{S}_L^A = \dim \mathcal{S}_{\mathcal{C}L}.
    \]

    The stabilizer group of the initial state is
    $\mathcal{S} \sqcup \mathcal{S}_L$ and the stabilizer group of the resulting state is
    $\mathcal{S}_\cC \sqcup \mathcal{S}_{\mathcal{C}L}$.
    By Theorem~\ref{thm:min_msmt_general}, the minimum number of measurements for converting the state is 
    \begin{align*}
&\dim(\mathcal{S}_\cC \sqcup \mathcal{S}_{\mathcal{C}L})  
- \dim\!\big(((\mathcal{S}_\cC \sqcup \mathcal{S}_{\mathcal{C}L})^A/K) \cap ((\mathcal{S} \sqcup \mathcal{S}_L)^A/K)\big)  \\
&\quad - \dim \mathcal{S}_\cC^{B}    \\
&= \dim \mathcal{S}_\cC + \dim \mathcal{S}_L^A 
- \dim \big((\mathcal{S}_\cC^A /K) \cap (\mathcal{S}^A/K)\big) \\
&\quad - \dim \mathcal{S}_{\mathcal{C}L} 
- \dim \mathcal{S}_\cC^{B} \\
&= \dim \mathcal{S}_\cC 
- \dim \big((\mathcal{S}_\cC^A /K) \cap (\mathcal{S}^A/K)\big) 
- \dim \mathcal{S}_\cC^{B}.
\end{align*}

    For nondegenerate codes, the weights of all physical representations of all non-identity stabilizers are not smaller than the distance. Given the constraints $\abs{A}\geq n-d+1$ and $\abs{B}<d$, for nondegenerate codes we have $\dim \mathcal{S}_\cC^B = 0$.
\end{proof}

\section{Subroutines\label{sec:subroutines}}

\subsection{Notations and building blocks}
We define several basic operations before presenting the protocols in the following sections.
\begin{definition}
    Let $\mathcal{C}$ be the stabilizer code, assumed implicitly in all subroutines.

    \textsc{Measure$(\rho, g)$} denotes a noisy stabilizer measurement of $g$ on the state $\rho$, returning $(s, L, R, \rho')$, where 
    \begin{itemize}
        \item $s\in \{0,1,*, \times\}$ is the measurement outcome, either a valid syndrome bit, an invalid syndrome bit $*$ produced by qubit loss, or a rejection symbol $\times$ indicating too many erasures.
        \item $L$ is the set of data qubits affected by qubit loss during the measurement.
        \item $R$ is the set of data qubits erased during the measurement.\footnote{The undetected loss is treated as an input error of the next stabilizer measurement.}
        \item $\rho'$ is the post-measurement state.
    \end{itemize}
    \textsc{MinimalGeneratingSet($S$)} returns a minimal generating set of a set of stabilizer elements $S$.

    \textsc{PauliCorrection($\rho, s, S^R$)} denotes the  Pauli recovery applied to the state $\rho$, based on the syndrome string $s$ obtained from measuring the stabilizers $S^R$.

    \textsc{IsErasureUncorrectable}$(R)$ returns whether the set of erased qubits $R$ is uncorrectable.
    
    \textsc{CanonicalGeneratingSet}$(\mathcal{S}, A)$ returns a minimal generating set of $\mathcal{S}$ with respect to the partition $(A,\overline{A})$.
    
    \textsc{AffectedGenerators}$(\mathcal{S}, A)$ returns a minimal generating set of $\mathcal{S}^A\sqcup \mathcal{S}^{A\overline{A}}$ that is guaranteed to be a subset of \textsc{CanonicalGeneratingSet}$(\mathcal{S}, A)$.

    \textsc{ChangeBasis}$(syn, S, S')$ takes a dictionary of syndrome information $syn$ and updates it according to the change of generator set from $S$ to $S'$. The syndrome information for $g\in S'$ is valid if and only if there exists a subset $\{g_i\}\subseteq S$ such that $\prod_i g_i = g$ and the syndrome bit for every $g_i$ is valid.

    \textsc{SyndromeString}$(syn, S)$ returns the syndrome string implicitly ordered by the generators in $S$.
 
\end{definition}

For simplicity's sake, we assume that the code is non-degenerate, so that the minimal EEC sequence depends only on $n,k$ and the local stabilizer subgroup of affected data qubits. For degenerate codes, more detailed erasure information needs to be returned by \textsc{Measure}$(\rho, g)$. In \textsc{Measure}$(\rho, g)$, erased data qubits are replaced by fresh ancillas via data TLDUs.

\newpage
\subsection{One round of syndrome string extraction}

\begin{algorithm}[ht]\caption{{\sc SyndromeStringExtraction$(\rho, S, R_0)$} performs one round of syndrome string extraction}
\label{alg:full_syn_ext}
\AlgoDontDisplayBlockMarkers\SetAlgoNoEnd\SetAlgoNoLine%
\SetKwInOut{KwIn}{Input}
\SetKwInOut{KwOut}{Output}
\SetKwFunction{IsErasureUncorrectable}{IsErasureUncorrectable}
\SetKwFunction{SyndromeBitExtraction}{SyndromeBitExtraction}
\SetKwFunction{AffectedGenerators}{AffectedGenerators}
\SetKwFunction{CanonicalGeneratingSet}{CanonicalGeneratingSet}
\SetKwFunction{Measure}{Measure}
\SetKwFunction{ChangeBasis}{ChangeBasis}
\SetKwFunction{SyndromeString}{SyndromeString}
\KwIn{An unknown code state $\rho$.\\
$S=\{g_1, g_2, \cdots, g_{r}\}$, the minimal generating set of $\mathcal{S}_{\mathcal{C}}$ chosen at the beginning of the round.\\
$R_0$, A set of data qubits that have been erased and replaced before this round.\\
}
\KwOut{A tuple $(\vec{s},  R, \rho, S')$.\\
$\vec{s} \in (\{0,1,*\}^r) \cup \{\times\}$ represents a syndrome bitstring, or a rejection symbol $\times$ indicating that the algorithm is stopped due to too many erasures. If $\vec{s}$ contains any $*$, then $\vec{s}$ is not fully valid.\\
$R$ is the set of data qubits that have been erased between the beginning of the entire protocol and the end of this round.\\
$\rho$ is the post-measurement state.\\
$S'$ is the minimal generating set of $\mathcal{S}_{\mathcal{C}}$ chosen at the end of the round.
}

$syn\gets \{\}$

\For{$g\in S$}{
    $syn[g] \gets *$
    
    \tcp{$*$ denotes that a valid syndrome bit for $g$ has not yet been obtained}
}

$R\gets R_0$

$M \gets S$ \tcp{the set 
of generators to be measured
}

\While{$M\neq \emptyset$}{
    $g\gets M.pop()$ \tcp{pop the first element from $M$}

    $(s_g,L_g,R_g, \rho)\gets \Measure(\rho, g)$

    $R\gets R\cup R_g$
  
    \If{$s_g=\times$ \textbf{or} $\abs{R}\geq d$}{
        \Return{$(\times, R,  \rho,\{\})$}
    }
    \If{$\abs{R}=d-1$}
    {
        $\vec{s} \gets \SyndromeString(syn, S)$
        
        \Return{$(\vec{s}, R, \rho, S)$}
    }
    
    \uIf{$s_g\neq *$}{
        $syn[g] \gets s_g$
    }
    \Else{
        $A_g\gets \AffectedGenerators(\mathcal{S}_{\mathcal{C}}, L_g)$
        
        $S_g\gets \CanonicalGeneratingSet(\mathcal{S}_{\mathcal{C}}, L_g)$

        $\ChangeBasis(syn, S, S_g)$

        $M\gets A_g \cup \{g\in S_g :syn[g]=*\} $

        $S\gets S_g$

    }
    
}
$\vec{s} \gets \SyndromeString(syn, S)$

\Return{$(\vec{s}, R,\rho, S)$}
\end{algorithm}

\end{document}